\newtheorem{corollary}{Corollary}
\definecolor{crimson}{RGB}{192,0,0}         
\definecolor{navy}{RGB}{47,85,151}         
\newif\if@restonecol
\newif\if@restonecol
\theoremstyle{plain}
\newtheorem{theorem}{Theorem}
\theoremstyle{plain}
\newtheorem{remark}{Remark}
\begin{document}

\title{Uplink Performance of Cell-Free Massive MIMO with Multi-Antenna Users Over Jointly-Correlated Rayleigh Fading Channels
\thanks{Z. Wang and J. Zhang are with the School of Electronic and Information Engineering, Beijing Jiaotong University, Beijing 100044, China. (e-mail: \{zhewang\_77, jiayizhang\}@bjtu.edu.cn).}
\thanks{B. Ai is with the State Key Laboratory of Rail Traffic Control and Safety, Beijing Jiaotong University, Beijing 100044, China (e-mail: aibo@ieee.org).}
\thanks{C.Yuen is with the Engineering Product Development Pillar, Singapore University of Technology and Design, Singapore 487372, Singapore (e-mail: yuenchau@sutd.edu.sg).}
\thanks{M. Debbah is with the Technology Innovation Institute, Abu Dhabi, United Arab Emirates, and also with CentraleSup{\'e}lec,
University Paris-Saclay, 91192 Gif-sur-Yvette, France (e-mail: merouane.debbah@tii.ae).}}
\author{Zhe Wang, Jiayi Zhang,~\IEEEmembership{Senior Member,~IEEE,} Bo Ai,~\IEEEmembership{Fellow,~IEEE} \\ Chau Yuen,~\IEEEmembership{Fellow,~IEEE}, and M{\'e}rouane Debbah,~\IEEEmembership{Fellow,~IEEE}}
\maketitle

\vspace{-1.5cm}
\begin{abstract}
In this paper, we investigate a cell-free massive MIMO system with both access points (APs) and user equipments (UEs) equipped with multiple antennas over jointly-correlated Rayleigh fading channels. We study four uplink implementations, from fully centralized processing to fully distributed processing, and derive their achievable spectral efficiency (SE) expressions with minimum mean-squared error successive interference cancellation (MMSE-SIC) detectors and arbitrary combining schemes. Furthermore, the global and local MMSE combining schemes are derived based on full and local channel state information (CSI) obtained under pilot contamination, which can maximize the achievable SE for the fully centralized and distributed implementation, respectively. We study a two-layer decoding implementation with an arbitrary combining scheme in the first layer and optimal large-scale fading decoding (LSFD) in the second layer. Besides, we compute novel closed-form SE expressions for the two-layer decoding implementation with maximum ratio (MR) combining. In the numerical results, we compare the SE performance for different implementation levels, combining schemes, and channel models. It is important to note that increasing the number of antennas per UE may degrade the SE performance.
\end{abstract}

\begin{IEEEkeywords}
Cell-Free massive MIMO, Weichselberger model, MMSE processing, spectral efficiency.
\end{IEEEkeywords}
\IEEEpeerreviewmaketitle

\section{Introduction}
Cell-Free massive multiple-input multiple-output (CF mMIMO) has been considered as a promising technology for future wireless communication \cite{9113273,7827017}. The basic idea of CF mMIMO is to deploy a large number of access points (APs), which are geographically distributed in the coverage area and connected to the central processing unit (CPU) via fronthaul connections. With the mutual cooperation and the assistance from the CPU, all APs coherently serve all user equipments (UEs) by spatial multiplexing on the same time-frequency resource \cite{7827017,7917284,8901451,[162]}. The main characteristic of CF mMIMO, compared with the traditional cellular mMIMO, is that the number of APs is envisioned to be much larger than the number of UEs and the operating regime with no cell boundaries \cite{[162]}.
The vast majority of papers on CF mMIMO rely upon a distributed implementation with the maximum ratio (MR) processing \cite{7827017,8097026}, while \cite{7917284} noticed that higher SE can be achieved with the assistance from partially or fully centralized processing at the CPU. Besides, the two-layer decoding scheme studied in \cite{8645336,7869024,8809413} for CF mMIMO is considered as an effective decoding technique with the arbitrary combining scheme in the first layer decoder and the large-scale fading decoding (LSFD) method in the second layer decoder. The authors of \cite{[162]} investigated four different CF mMIMO implementations from fully centralized to fully distributed with global or local minimum mean-square error (MMSE) combining and showed that the CF mMIMO system is competitive with the centralized implementation and MMSE processing compared with the cellular mMIMO and the small cell network.

The vast majority of existing papers focused on the analysis of CF mMIMO with single-antenna UEs, e.g. \cite{9113273,7827017,7917284,8901451,[162]}. However, in practice, contemporary UEs with moderate physical sizes (e.g. laptops, tablets, and vehicles) have already been equipped with multiple antennas to achieve higher multiplexing gain and boost the SE performance. So it is necessary to evaluate the performance for CF mMIMO systems with multi-antenna UEs. Besides, the effects of additional antennas at UEs should also be investigated to design the CF mMIMO system effectively. Recent works have investigated the performance of the CF mMIMO system with multi-antenna UEs, e.g.  \cite{113,8901451,194,8811486,9424703}. The authors in \cite{8901451} investigated the CF mMIMO system with a user-centric (UC) approach wherein both APs and UEs are equipped with multiple antennas. Besides, the downlink performance of the CF mMIMO system with multi-antenna UEs was analyzed in \cite{194}, where two downlink transmission protocols (with/without downlink pilot transmission) were considered. In addition, the authors of \cite{113} considered the uplink (UL) of the CF mMIMO system with multi-antenna UEs and derived the closed-form UL SE expressions. Moreover, the authors of \cite{8811486} and \cite{9424703} evaluated the performance for the CF mMIMO system with multi-antenna UEs and low-resolution DACs or ADCs. However, all these works are based on a classical distributed processing scheme as in \cite{7827017} (so-called “Level 2” in \cite{[162]}) but neglect pragmatic processing schemes in CF mMIMO, e.g. ``fully centralized processing" and ``LSFD scheme".

Moreover, all works about CF mMIMO with multi-antenna UEs made the overly idealistic and simplifying assumption of independent and identically distributed (i.i.d.) Rayleigh fading channels. It has been proved that the spatial correlation that exists in any practical channel may have a significant impact on performance \cite{8809413,9276421,8869794}. To the best of the authors' knowledge, there is no research on CF mMIMO with multi-antenna UEs over practical correlated channels so far. On the other hand, works on the traditional cellular mMIMO with multi-antenna UEs mainly focused on the classic Kronecker channel model \cite{1021913,7500452,7797479,7946138}, modeling the spatial correlation properties at the AP-side and UE-side separately. However, the authors of \cite{ozcelik2003deficiencies} mentioned that the classic Kronecker model neglects the joint correlation feature of the channel. As a remedy, a more practical channel model called the jointly-correlated Weichselberger model was proposed in \cite{1576533}, which not only considers the correlation features at both the AP-side and UE-side but models the joint correlation dependence between each AP-UE pair.

Motivated by the above observations, we investigate a CF mMIMO system with multi-antenna UEs over Rayleigh fading channels described by the jointly-correlated Weichselberger model. The comparisons of relevant literature with this paper are summarized in Table~\ref{Paper_comparison}. The major contributions of this paper are listed as follows.
\begin{itemize}
\item We consider a CF mMIMO system with multi-antenna UEs and the pilot contamination over the jointly-correlated Rayleigh fading channel, which is firstly considered in CF mMIMO, and investigate four different UL implementations from fully centralized to fully distributed inspired by \cite{[162]}.
\item We derive achievable UL SE expressions for four implementations with MMSE-SIC detectors and arbitrary combining schemes and compute novel closed-form SE expressions for Level 3 and Level 2 with MR combining. Note that our derived expressions hold for arbitrary combining scheme, arbitrary UL precoding scheme, and any Rayleigh fading channel models, e.g the Weichselberger model, the Kronecker model, and the i.i.d. Rayleigh fading model, etc.
\item We investigate global MMSE and local MMSE combining schemes for the scenario with multi-antenna UEs. Besides, we show that global MMSE combining in Level 4 and L-MMSE combining in Level 1 is also optimal to achieve the maximum SE value, respectively. Moreover, in Level 3, the optimal LSFD scheme with multi-antenna UEs is proposed, which not only minimizes the MSE between the two-layer decoding signal and the original signal but maximizes the achievable SE for Level 3.
\end{itemize}

\begin{table*}[tp]
  \centering
  \fontsize{9}{10}\selectfont
  \caption{Comparison of relevant literature with this paper.}
  \label{Paper_comparison}
    \begin{tabular}{ !{\vrule width1.2pt}  m{1.3 cm}<{\centering} !{\vrule width1.2pt}  m{2.cm}<{\centering} !{\vrule width1.2pt}  m{1.5 cm}<{\centering} !{\vrule width1.2pt} m{1.5cm}<{\centering} !{\vrule width1.2pt} m{1.5cm}<{\centering} !{\vrule width1.2pt} m{1 cm}<{\centering} !{\vrule width1.2pt} m{1.5cm}<{\centering} !{\vrule width1.2pt} m{1.5cm}<{\centering} !{\vrule width1.2pt} m{2cm}<{\centering} !{\vrule width1.2pt} }

    \Xhline{1.2pt}
        \rowcolor{gray!30} \bf Ref.  &  \bf Multi-antenna UEs &  \bf Correlation &  \bf Joint correlation  &  \bf Fully centralized processing &  \bf LSFD &  \bf MMSE processing & \bf Pilot contamination & \bf Closed-form  \cr
    \Xhline{1.2pt}

        \cite{[162]} & \makecell[c]{\XSolidBrush} & \makecell[c]{\Checkmark} & \makecell[c]{\XSolidBrush} & \makecell[c]{\Checkmark} & \makecell[c]{\Checkmark} & \makecell[c]{\Checkmark} & \makecell[c]{\Checkmark} & \makecell[c]{\XSolidBrush} \cr\hline
        \cite{8901451} & \makecell[c]{\Checkmark} & \makecell[c]{\XSolidBrush} & \makecell[c]{\XSolidBrush} & \makecell[c]{\XSolidBrush} & \makecell[c]{\XSolidBrush} & \makecell[c]{\Checkmark} & \makecell[c]{\Checkmark} & \makecell[c]{\XSolidBrush}\cr\hline
        \cite{194} & \makecell[c]{\Checkmark} & \makecell[c]{\XSolidBrush} & \makecell[c]{\XSolidBrush} & \makecell[c]{\XSolidBrush} & \makecell[c]{\XSolidBrush} & \makecell[c]{\Checkmark} & \makecell[c]{\XSolidBrush} & \makecell[c]{\Checkmark}\cr\hline
        \cite{9424703} & \makecell[c]{\Checkmark} & \makecell[c]{\XSolidBrush} & \makecell[c]{\XSolidBrush} & \makecell[c]{\XSolidBrush} & \makecell[c]{\XSolidBrush} & \makecell[c]{\XSolidBrush} & \makecell[c]{\XSolidBrush} & \makecell[c]{\Checkmark}\cr\hline
        \bf Proposed &  \makecell[c]{\Checkmark} & \makecell[c]{\Checkmark} & \makecell[c]{\Checkmark} & \makecell[c]{\Checkmark} & \makecell[c]{\Checkmark} & \makecell[c]{\Checkmark} & \makecell[c]{\Checkmark} & \makecell[c]{\Checkmark}\cr\hline
    \Xhline{1.2pt}
    \end{tabular}
  \vspace{0cm}
\end{table*}

The rest of this paper is organized as follows. In Section \ref{se:model}, we describe the Weichselberger channel model for a CF mMIMO system, investigate some practical channel models and discuss their relationships to the Weichselberger model and describe the phases of a CF mMIMO system, including the channel estimation and uplink data transmission. Next, Section \ref{se:SE Analysis} introduces four levels of AP cooperation. The achievable SE, MMSE or LMMSE combining, the optimal LSFD scheme and novel closed-form SE expressions for Level 3 and Level 2 with MR combining are also provided in this section. Then, numerical results and performance analysis are provided in Section \ref{se:Numerical Results}. Finally, the major conclusions and future directions are drawn in Section \ref{se:conclusion}.

\textbf{\emph{Notation}}: Column vectors and matrices are denoted by boldface lowercase letters $\mathbf{x}$ and boldface uppercase letters $\mathbf{X}$, respectively. $\left( \cdot \right) ^*$, $\left( \cdot \right) ^T$, and $\left( \cdot \right) ^H$ represent conjugate, transpose, and conjugate transpose, respectively. We use $\mathrm{diag}\left( \mathbf{A}_1,\cdots ,\mathbf{A}_n \right)$ to denote a block-diagonal matrix with the square matrices $\mathbf{A}_1,\cdots ,\mathbf{A}_n $ on the diagonal. $\mathbb{E} \left\{ \cdot \right\}$, $\mathrm{tr}\left\{ \cdot \right\}$ and $\triangleq$ represent the expectation operator, the trace operator and the definitions, respectively. $\left| \cdot \right|$ denotes the determinant of a matrix or the absolute value of a number. The $n\times n$ identity matrix is $\mathbf{I}_{n\times n}$. A column vector formed by the stack of the columns of $\mathbf{A}$ is represented by $\mathrm{vec}\left( \mathbf{A} \right)$. $\otimes$ and $\odot$ denote the Kronecker products and the element-wise products, respectively. $\left\| \cdot \right\|$ and $\left\| \cdot \right\| _{\mathrm{F}}$ are the Euclidean norm and the Frobenious norm, respectively. The $M\times N$ matrix with unit entries is denoted by $\mathbf{1}_{M\times N}$. $\mathbf{x}\sim \mathcal{N} _{\mathbb{C}}\left( 0,\mathbf{R} \right)$ represents a circularly symmetric complex Gaussian distribution vector with correlation matrix $\mathbf{R}$.

\section{System Model}\label{se:model}
As illustrated in Fig. \ref{System_Model}, we consider a CF mMIMO system consisting of $M$ APs and $K$ UEs arbitrarily distributed in a wide coverage area. $L$ and $N$ denote the number of antennas per AP and UE, respectively. Let $\mathbf{H}_{mk}\in \mathbb{C}^{L\times N}$ denote the channel response between AP $m$ and UE $k$. We assume that $\mathbf{H}_{mk}\in \mathbb{C}^{L\times N}$ for every AP $m$-UE $k$ pair is independent, $m=1, \ldots ,M, k=1,\ldots,K$. We investigate a standard block fading model, where the channel responses are assumed to be constant and frequency flat in a coherence block of $\tau _c$-length (in channel uses) and $\tau _c$ is equal to the product of the coherence bandwidth and coherence time \cite{8187178}. Besides, $\tau _p$ and $\tau _u=\tau _c-\tau _p$ channel uses are reserved for the training and the data transmission, respectively.
\begin{figure}[ht]
\centering
\includegraphics[scale=0.7]{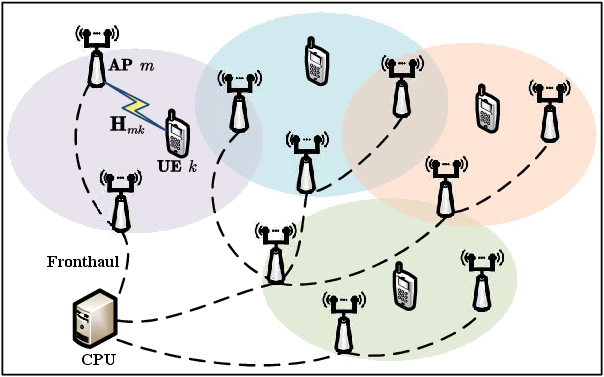}
\caption{Illustration of a CF mMIMO system.
\label{System_Model}}
\end{figure}
\subsection{Foundation of Weichselberger Rayleigh Fading Channels}
In this paper, we consider the jointly-correlated (also known as the Weichselberger model \cite{1576533}) Rayleigh fading channels as
\begin{equation}\label{eq:channel_model}
\mathbf{H}_{mk}=\mathbf{U}_{mk,\mathrm{r}}\left( \mathbf{\tilde{W}}_{mk}\odot \mathbf{H}_{mk,\mathrm{iid}} \right) \mathbf{U}_{mk,\mathrm{t}}^{H}
\end{equation}
where $\mathbf{H}_{mk,\mathrm{iid}}\in \mathbb{C}^{L\times N}$ is composed of i.i.d. $\mathcal{N}_{\mathbb{C}}\left( 0,1 \right)$ random entries, $\mathbf{U}_{mk,\mathrm{r}}=[ \mathbf{u}_{mk,\mathrm{r},1},\cdots ,\mathbf{u}_{mk,\mathrm{r},L} ] \in \mathbb{C}^{L\times L}$ and $\mathbf{U}_{mk,\mathrm{t}}=\left[ \mathbf{u}_{mk,\mathrm{t},1},\cdots ,\mathbf{u}_{mk,\mathrm{t},N} \right] \in \mathbb{C}^{N\times N}$ denote the eigenvector matrices of the one-sided correlation matrices $\mathbf{R}_{mk,\mathrm{r}}\triangleq \mathbb{E}[ \mathbf{H}_{mk}\mathbf{H}_{mk}^{H} ]\in \mathbb{C}^{L\times L}$ and $\mathbf{R}_{mk,\mathrm{t}}\triangleq \mathbb{E}[ \mathbf{H}_{mk}^{T}\mathbf{H}_{mk}^{*}] \in \mathbb{C}^{N\times N}$, respectively. Moreover, $\mathbf{\tilde{W}}_{mk}$ is the element-wise square root of the ``eigenmode coupling matrix" $\mathbf{W}_{mk}\triangleq \mathbf{\tilde{W}}_{mk}\odot \mathbf{\tilde{W}}_{mk}\in \mathbb{R}^{L\times N}$ with the $( l,n )$-th element $[ \mathbf{W}_{mk}] _{ln}$ specifying the average amount of power coupling from $\mathbf{u}_{mk,\mathrm{r},l}$ to $\mathbf{u}_{mk,\mathrm{t},n}$. Besides, $\mathbf{W}_{mk}$ links the joint correlation properties between AP $m$ and UE $k$ and shows the spatial arrangement of scattering objects. Let $\bm{\lambda }_{mk,\mathrm{r}}\in \mathbb{R}^L$ and $\bm{\lambda }_{mk, \mathrm{t}}\in \mathbb{R}^N$ denote the eigenvalues-vectors of the one-sided correlation matrices, whose $l$-th element $[ \bm{\lambda }_{mk,\mathrm{r}} ] _l$ and $n$-th element $[ \bm{\lambda }_{mk,\mathrm{t}} ] _n$ are given through
\begin{equation}\label{eq:lambda}
\begin{aligned}
\left[ \bm{\lambda }_{mk,\mathrm{r}} \right] _l=\sum_{n=1}^N{\left[ \mathbf{W}_{mk} \right] _{ln}},\left[ \bm{\lambda }_{mk,\mathrm{t}} \right] _n=\sum_{l=1}^L{\left[ \mathbf{W}_{mk} \right] _{ln}}.
\end{aligned}
\end{equation}

Note that the one-sided correlation matrices can be constructed as $\mathbf{R}_{mk,\mathrm{r}}=\mathbf{U}_{mk,\mathrm{r}}\mathrm{diag}( \bm{\lambda }_{mk,\mathrm{r}} ) \mathbf{U}_{mk,\mathrm{r}}^{H}$ and $\mathbf{R}_{mk,\mathrm{t}}=\mathbf{U}_{mk,\mathrm{t}}\mathrm{diag}( \bm{\lambda }_{mk,\mathrm{t}} ) \mathbf{U}_{mk,\mathrm{t}}^{H}$. In addition, $\mathbf{H}_{mk}$ can be formed as $\mathbf{H}_{mk}=[ \mathbf{h}_{mk,1},\cdots ,\mathbf{h}_{mk,N} ]$, where $\mathbf{h}_{mk,n}\in \mathbb{C}^L$ is the channel between AP $m$ and $n$-th antenna of UE $k$. As shown in \cite{5340650}, the channel can be modeled as $\mathbf{h}_{mk}=\mathrm{vec}( \mathbf{H}_{mk} ) \sim \mathcal{N}_{\mathbb{C}}( 0,\mathbf{R}_{mk} )$, where $\mathbf{R}_{mk}\triangleq \mathbb{E} [ \mathrm{vec}( \mathbf{H}_{mk} ) \mathrm{vec}( \mathbf{H}_{mk} ) ^H ] \in \mathbb{C}^{LN\times LN}$ is the full correlation matrix as
\begin{equation}
\begin{aligned}
\mathbf{R}_{mk}&=\left( \mathbf{U}_{mk,\mathrm{t}}^{*}\otimes \mathbf{U}_{mk,\mathrm{r}} \right) \mathrm{diag}\left( \mathrm{vec}\left( \mathbf{W}_{mk} \right) \right) \left( \mathbf{U}_{mk,\mathrm{t}}^{*}\otimes \mathbf{U}_{mk,\mathrm{r}} \right) ^H\\
&=\sum_{l=1}^L{\sum_{n=1}^N{\left[ \mathbf{W}_{mk} \right] _{ln}\left( \mathbf{u}_{mk,\mathrm{t},n}^{*}\otimes \mathbf{u}_{mk,\mathrm{r},l} \right) \left( \mathbf{u}_{mk,\mathrm{t},n}^{*}\otimes \mathbf{u}_{mk,\mathrm{r},l} \right) ^H}}. \label{eq:Correlation_Matrix}
\end{aligned}
\end{equation}

We can also structure the full correlation matrix in the block form as \cite{9148706} where the $(n,i)$-th submatrix is
 $\mathbf{R}_{mk}^{ni}=\mathbb{E} \{ \mathbf{h}_{mk,n}\mathbf{h}_{mk,i}^{H} \}$ with $\mathbf{h}_{mk,n}$ and $\mathbf{h}_{mk,i}$ being the $n$-th column and $i$-th column of $\mathbf{H}_{mk}$, respectively.
Note that the large-scale fading coefficient between AP $m$ and UE $k$ can be extracted from the full correlation matrix $\mathbf{R}_{mk}$ as
\begin{equation}\label{eq:LargeScale}
\beta _{mk}=\frac{1}{LN}\mathrm{tr}\left( \mathbf{R}_{mk} \right) =\frac{1}{LN}\left\| \mathbf{W}_{mk} \right\| _1.
\end{equation}

\begin{remark}
The full correlation matrix in \eqref{eq:Correlation_Matrix} shows a Kronecker structure on eigenmode level and is determined by the unitary matrix $\mathbf{U}_{mk,\mathrm{r}}$ at the AP side, $\mathbf{U}_{mk,\mathrm{t}}$ at the UE side and the coupling matrix $\mathbf{W}_{mk}$. The eigenvalues-vectors $\bm{\lambda }_{mk,\mathrm{r}}$ and $\bm{\lambda }_{mk, \mathrm{t}}$ don't influence the correlation matrix directly but are implicitly given through \eqref{eq:lambda}. Moreover, the large-scale fading coefficient is reflected in the coupling matrix $\mathbf{W}_{mk}$ so that the power constraint in \eqref{eq:LargeScale} should be satisfied in all scenarios investigated.
\end{remark}

\subsection{Special Cases of Weichselberger Fading Channels}\label{Special_channel_cases}
In this subsection, we investigate some practical channel models and discuss their relationships to the Weichselberger model. Note that the jointly-correlated model investigated in \eqref{eq:channel_model} embraces most channels of great interest. And structures of $\mathbf{W}_{mk}$ display the spatial arrangement of scattering objects in the real environment. In Fig. \ref{W_Scatters}, we show different structures of $\mathbf{W}_{mk}$ and their respective physical radio environments, where the black dots denote the scattering objects. Note that $\left[ \mathbf{W}_{mk} \right] _{ln}$ is the average power between the $l$-th eigenmode of the AP-side and the $n$-th eigenmode of the UE-side so a nonzero entry of $\mathbf{W}_{mk}$ shows a link between the respective eigenmodes (gray squares in $\mathbf{W}$) and a zero-entry denotes an inactive link between the respective eigenmodes (white squares in $\mathbf{W}$). The coupling matrices and corresponding physical radio environments of the Weichselberger model are represented by $\mathbf{W}_1$.
\begin{figure*}[ht]
\centering
\includegraphics[scale=0.7]{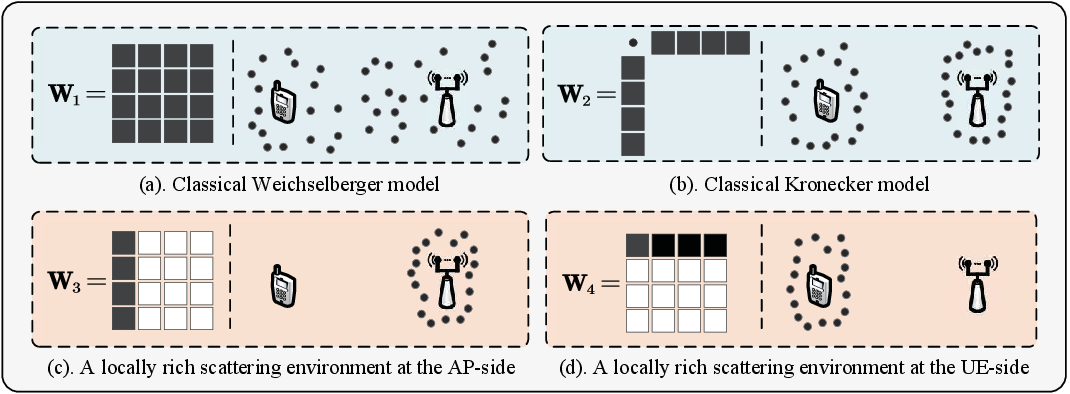}
\caption{Different structures of the coupling matrix and respective physical radio environments.
\label{W_Scatters}}
\end{figure*}

\subsubsection{The Kronecker Channel Model}
Firstly, we investigate the relationship between the Weichselberger model and the classical Kronecker model. As observed in $\mathbf{W}_{2}$, both the AP-side and UE-side go through locally rich scattering environments so we can model spatial correlation properties separately. And $\mathbf{W}_{mk}$ has rank-one form as
\begin{equation}
\mathbf{W}_{mk}=\frac{1}{LN\beta _{mk}}\bm{\lambda }_{mk,\mathrm{r}}\bm{\lambda }_{mk,\mathrm{t}}^{T},
\end{equation}
where $\bm{\lambda }_{mk,\mathrm{r}}$ and $\bm{\lambda }_{mk,\mathrm{t}}$ are defined in \eqref{eq:lambda}. The classic Kronecker model neglects the joint spatial correlation structure and the channel in \eqref{eq:channel_model} can be written as
\begin{equation}\label{eq:Kronecker}
\mathbf{H}_{mk}=\frac{1}{\sqrt{LN\beta _{mk}}}\sqrt{\mathbf{R}_{mk,\mathrm{r}}}\mathbf{H}_{mk,\mathrm{iid}}\sqrt{\mathbf{R}_{mk,\mathrm{t}}},
\end{equation}
The full correlation matrix $\mathbf{R}_{mk}$ can be reformulated as
$
\mathbf{R}_{mk}=\frac{1}{LN\beta _{mk}}\left( \mathbf{R}_{mk,\mathrm{t}}^{T}\otimes \mathbf{R}_{mk,\mathrm{r}} \right)
$ \cite{5997288}.

Moreover, a special case for \eqref{eq:Kronecker} is ``$\mathbf{R}_{mk,\mathrm{r}}=\beta _{mk}\mathbf{I}_L$, $\mathbf{R}_{mk,\mathrm{t}}\ne \beta _{mk}\mathbf{I}_N$", which denotes the scenario with only correlation properties at AP-sides and no correlation properties at UE-sides. And a similar case with ``$\mathbf{R}_{mk,\mathrm{r}}\ne\beta _{mk}\mathbf{I}_L$, $\mathbf{R}_{mk,\mathrm{t}}=\beta _{mk}\mathbf{I}_N$" means that the scenario with only correlation properties at UE-sides and no correlation properties at AP-sides.

\subsubsection{Practical Radio Environments}\label{Practical_Channel_Models}
Then, we elaborate on some practical physical radio environments, which are reflected in the structures of $\mathbf{W}$. As shown in $\mathbf{W}_{3}$, only a single column is available in $\mathbf{W}$, which shows a locally rich scattering environment at the AP-side. A similar structure for the case of ``a locally rich scattering environment at the UE-side" can be represented by $\mathbf{W}_{4}$ with only a single row available.

\subsubsection{The I.I.D. Rayleigh Fading Channel}
Last but not least, we focus on the i.i.d. Rayleigh fading channel, which is widely assumed in prior works \cite{113,8901451,194,8811486,9424703}. The channel model in \eqref{eq:channel_model} reduces to the uncorrelated Rayleigh fading channel if all the elements in $\mathbf{W}_{mk}$ are equal shown as
\begin{equation}\label{IID_Rayleigh}
\mathbf{W}_{mk}=\beta _{mk}\mathbf{1}_{L\times N}.
\end{equation}
Then, \eqref{eq:channel_model} will reduce to
$
\mathbf{H}_{mk}=\sqrt{\beta _{mk}}\mathbf{H}_{mk,\mathrm{iid}}.
$
Even though the i.i.d. Rayleigh fading channel is widely investigated in prior works, \eqref{eq:channel_model} is undoubtedly a more practical choice for CF mMIMO systems with multi-antenna APs and UEs, which can characterize the joint-correlation features in practical scenarios.
\vspace{-0.5cm}
\subsection{Uplink Transmission}
\subsubsection{Channel Estimation}
In this phase, $\tau _p$-length channel uses are adopted for the channel estimation. Note that $\tau _p$ mutually orthogonal $\tau _p$-length pilot sequences are used for this phase and each pilot matrix is composed of $N$ pilot sequences selected from the pilot book.
Let $\mathbf{\Phi }_k\in \mathbb{C}^{\tau _p\times N}$ denote the pilot matrix of UE $k$ with $\boldsymbol{\Phi }_{k}^{H}\boldsymbol{\Phi }_l=\tau _p\mathbf{I}_N$, if $\ l=k$ and ${\bf{0}}$ otherwise.
As in \cite{113} and \cite{194}, all UEs can be assigned to mutually orthogonal pilot matrices if $\tau _p\geqslant KN$. However, we consider a more practical scenario where more than one UE is assigned to a same pilot matrix. We define $\mathcal{P}_k$ as the index subset of UEs that use the same pilot matrix as UE $k$ including itself.

All UEs send pilot matrices to APs and the received signal $\mathbf{Y}_{m}^{p}\in \mathbb{C}^{L\times \tau _p}$ at AP $m$ is
\begin{equation}\label{eq:Pilot_Signal}
\mathbf{Y}_{m}^{p}=\sum_{k=1}^K{\mathbf{H}_{mk}\mathbf{\Omega }_k\mathbf{\Phi }_{k}^{T}+\mathbf{N}_{m}^{p}},
\end{equation}
where $\mathbf{\Omega }_k\in \mathbb{C} ^{N\times N}$ is the precoding matrix of UE $k$ during the phase of pilot transmission\footnote{We assume $\mathbf{\Omega }_k$ and $\mathbf{P}_k$ in the following are available at all APs and the CPU.}, $\mathbf{N}_{m}^{p}\in \mathbb{C}^{L\times \tau _p}$ is the additive noise at AP $m$ with independent $\mathcal{N}_{\mathbb{C}}( 0,\sigma ^2 )$ elements and $\sigma ^2$ is the noise power, respectively. The pilot transmission is under the pilot power constraint as $\mathrm{tr(}\mathbf{\Omega }_k\mathbf{\Omega }_{k}^{H})\leqslant \hat{p}_k$, where $\hat{p}_k$ is the maximum pilot transmit power of UE $k$.

To obtain sufficient statistics for channel estimation, AP $m$ computes the projection of $\mathbf{Y}_{m}^{p}$ onto $\boldsymbol{\Phi }_{k}^{*}$ as
\begin{equation}\label{eq:Estimate_Signal}
\begin{aligned}
\mathbf{Y}_{mk}^{p}&=\mathbf{Y}_{m}^{p}\boldsymbol{\Phi }_{k}^{*}=\sum_{l=1}^K{\mathbf{H}_{ml}\mathbf{\Omega }_l\left( \boldsymbol{\Phi }_{l}^{T}\boldsymbol{\Phi }_{k}^{*} \right) +}\mathbf{N}_{m}^{p}\boldsymbol{\Phi }_{k}^{*}\\
&=\sum_{l\in \mathcal{P}_k}{\tau _p\mathbf{H}_{ml}\mathbf{\Omega }_l}+\mathbf{Q}_{m}^{p},
\end{aligned}
\end{equation}
where $\mathbf{Q}_{m}^{p}\triangleq \mathbf{N}_{m}^{p}\boldsymbol{\Phi }_{k}^{*}$. By implementing the vectorization operation for $\mathbf{Y}_{mk}^{p}$, we derive $\mathbf{y}_{mk}^{p}=\mathrm{vec}\left( \mathbf{Y}_{mk}^{p} \right) \in \mathbb{C}^{LN}$
\begin{equation}\label{eq:Estimate_Signal_vec}
\begin{aligned}
\mathbf{y}_{mk}^{p}&=\sum_{l\in \mathcal{P}_k}{\tau _p\left( \mathbf{\Omega }_{l}^{T}\otimes \mathbf{I}_L \right) \mathrm{vec}\left( \mathbf{H}_{ml} \right) +}\mathrm{vec}\left( \mathbf{Q}_{m}^{p} \right)\\
&=\sum_{l\in \mathcal{P}_k}{\tau _p\mathbf{\tilde{\Omega}}_{l}\mathbf{h}_{ml}+\mathbf{q}_{m}^{p}},
\end{aligned}
\end{equation}
where $\mathbf{\tilde{\Omega}}_l\triangleq \mathbf{\Omega }_l^{T}\otimes \mathbf{I}_L$ and $\mathbf{q}_{m}^{p}\triangleq \mathrm{vec}\left( \mathbf{Q}_{m}^{p} \right)$. As in \cite{5340650}, the MMSE estimate of $\mathbf{h}_{mk}$ is
\begin{equation}\label{eq:MMSE_Estimation}
\mathbf{\hat{h}}_{mk}=\mathrm{vec}( \mathbf{\hat{H}}_{mk} ) =\mathbf{R}_{mk}\mathbf{\tilde{\Omega}}_{k}^{H}\mathbf{\Psi }_{mk}^{-1}\mathbf{y}_{mk}^{p},
\end{equation}
where $\mathbf{\Psi }_{mk}=\mathbb{E}\{ \mathbf{y}_{mk}^{p}( \mathbf{y}_{mk}^{p} ) ^H \}/{\tau _p}=\sum\nolimits_{l\in \mathcal{P}_k}{\tau _p\mathbf{\tilde{\Omega}}_{l}\mathbf{R}_{ml}}\mathbf{\tilde{\Omega}}_{l}^{H}+\sigma ^2\mathbf{I}_{LN}$. The estimate $\mathbf{\hat{h}}_{mk}$ and estimation error $\mathbf{\tilde{h}}_{mk}=\mathrm{vec}( \mathbf{\tilde{H}}_{mk} )=\mathbf{h}_{mk}-\mathbf{\hat{h}}_{mk}$ are independent random vectors distributed as
$\mathbf{\hat{h}}_{mk}\sim \mathcal{N}_{\mathbb{C}}( \mathbf{0},\mathbf{\hat{R}}_{mk} )$, $\mathbf{\tilde{h}}_{mk}\sim \mathcal{N}_{\mathbb{C}}( \mathbf{0},\mathbf{C}_{mk} )$
with $\mathbf{\hat{R}}_{mk}\triangleq \tau _p\mathbf{R}_{mk}\mathbf{\tilde{\Omega}}_{k}^{H}\mathbf{\Psi }_{mk}^{-1}\mathbf{\tilde{\Omega}}_k\mathbf{R}_{mk}$ and
$
\mathbf{C}_{mk}=\mathbb{E}\{ \mathrm{vec}( \mathbf{\tilde{H}}_{mk} ) \mathrm{vec}( \mathbf{\tilde{H}}_{mk}) ^H \} =\mathbf{R}_{mk}- \tau _p\mathbf{R}_{mk}\mathbf{\tilde{\Omega}}_{k}^{H}\mathbf{\Psi }_{mk}^{-1}\mathbf{\tilde{\Omega}}_k\mathbf{R}_{mk}.
$
We can also structure the full covariance matrix of the channel estimate into a block form
with the $(n,i)$-th submatrix being $\mathbf{\hat{R}}_{mk}^{ni}=\mathbb{E}\{ \mathbf{\hat{h}}_{mk,n}\mathbf{\hat{h}}_{mk,i}^{H} \} $, 
where $\mathbf{\hat{h}}_{mk,n}$ and $\mathbf{\hat{h}}_{mk,i}$ are the $n$-th and $i$-th column of $\mathbf{\hat{H}}_{mk}$, respectively.

\subsubsection{Data Transmission}
The transmitted signal $\mathbf{s}_k=\left[ s_{k,1},\cdots ,s_{k,N} \right] ^T\in \mathbb{C}^{N}$ from UE $k$ can be constructed as $\mathbf{s}_k=\mathbf{P}_k\mathbf{x}_k$,
where $\mathbf{x}_k\sim \mathcal{N}_{\mathbb{C}}\left( 0,\mathbf{I}_N \right) $ is the data symbol transmitted from UE $k$ and $\mathbf{P}_k\in \mathbb{C} ^{N\times N}$ is the precoding matrix during the phase of data transmission which should satisfy the power constraint of UE $k$ as $\mathrm{tr}( \mathbf{P}_{k}\mathbf{P}_{k}^{H} ) \leqslant p_k$ with $p_k$ is the maximum transmitted power of UE $k$, respectively.

The received signal $\mathbf{y}_m\in \mathbb{C}^L$ at AP $m$ is
\begin{align}
\mathbf{y}_m=\sum_{k=1}^K{\mathbf{H}_{mk}\mathbf{s}_k}+\mathbf{n}_m=\sum_{k=1}^K{\mathbf{H}_{mk}\mathbf{P}_k\mathbf{x}_k}+\mathbf{n}_m,
\end{align}
where $\mathbf{n}_m\sim \mathcal{N}_{\mathbb{C}}( 0,\sigma ^2\mathbf{I}_L )$ is the independent receiver noise.
\section{Four Signal Processing Schemes}\label{se:SE Analysis}
Due to the versatile network topology of CF mMIMO, four signal processing schemes were investigated in \cite{[162]} with the multi-antenna APs/single-antenna UEs scenario, depending on the required performance and complexity tradeoff. To the best of the authors' knowledge, recent works for UL analysis of CF mMIMO with multi-antenna UEs \cite{113,8901451,8811486}, consider the classical processing structure of CF mMIMO (so-called ``Level 2" in \cite{[162]}). So it is undoubtedly vital to investigate other processing schemes with multi-antenna UEs to provide important results and insights for practical CF mMIMO systems. To better distinguish the differences between four implementations clearly, the comparison of four levels investigated below is shown as Fig. \ref{Four_Levels} (we follow the naming principle in \cite{[162]}).

\begin{figure}[t]
\centering
\includegraphics[scale=0.7]{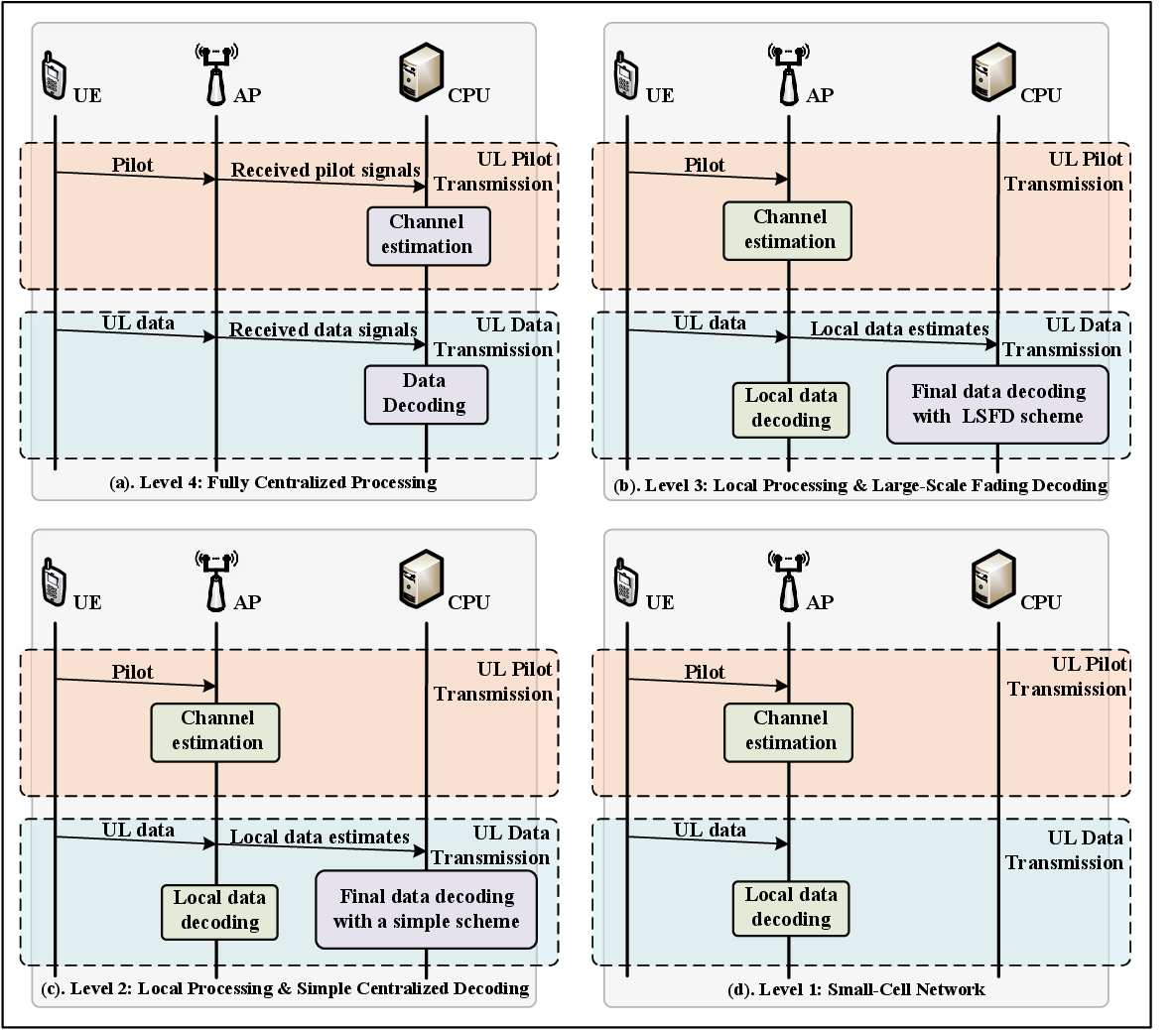}
\caption{The comparison of four UL implementations.
\label{Four_Levels}}
\vspace{-0.3cm}
\end{figure}
\subsection{Level 4: Fully Centralized Processing}
In this processing scheme, all the channel estimation and data detection are processed in the CPU.
All APs serve only as relays by sending all received pilot signals and received data signals to the CPU. The pilot signal $\mathbf{Y}_{\mathrm{c}}^{p}\in \mathbb{C}^{ML\times \tau _p}$ at the CPU can be formed as
\begin{equation}
\mathbf{Y}_{\mathrm{c}}^{p}=\left[ \begin{array}{c}
	\mathbf{Y}_{1}^{p}\\
	\vdots\\
	\mathbf{Y}_{M}^{p}\\
\end{array} \right] =\sum_{k=1}^K{\left[ \begin{array}{c}
	\mathbf{H}_{1k}\\
	\vdots\\
	\mathbf{H}_{Mk}\\
\end{array} \right] \mathbf{\Omega }_k\mathbf{\Phi }_{k}^{T}}+\left[ \begin{array}{c}
	\mathbf{N}_{1}^{p}\\
	\vdots\\
	\mathbf{N}_{M}^{p}\\
\end{array} \right].
 \end{equation}
For UE $k$, the collective channel $\mathbf{h}_k\in \mathbb{C}^{MLN}$ can be shown as
$
\mathbf{h}_k=[ \mathrm{vec}( \mathbf{H}_{1k} )^T ,\cdots ,\mathrm{vec}( \mathbf{H}_{Mk} )^T ] ^T\sim \mathcal{N} _{\mathbb{C}}( \mathbf{0},\mathbf{R}_k)
$
where $\mathbf{R}_k=\mathrm{diag}( \mathbf{R}_{1k},\cdots ,\mathbf{R}_{Mk}) \in \mathbb{C}^{MLN\times MLN}$ is the whole correlation matrix of UE $k$.

As the method of local estimation \eqref{eq:Estimate_Signal} and \eqref{eq:Estimate_Signal_vec}, the CPU can compute all the MMSE estimates. The collective channel estimate of UE $k$ can be constructed as
$
\mathbf{\hat{h}}_k\triangleq \left[ \mathbf{\hat{h}}_{1k}^T,\cdots ,\mathbf{\hat{h}}_{Mk}^T \right] ^T\sim \mathcal{N} _{\mathbb{C}}\left( \mathbf{0},\tau _p\mathbf{R}_k\mathbf{\bar{\Omega}}_{k}^{H}\mathbf{\Psi }_{k}^{-1}\mathbf{\bar{\Omega}}_k\mathbf{R}_k \right)
$
where $\mathbf{\Psi }_{k}^{-1}=\mathrm{diag}\left( \mathbf{\Psi }_{1k}^{-1},\cdots ,\mathbf{\Psi }_{Mk}^{-1} \right)$ and $\mathbf{\bar{\Omega}}_{k}=\mathrm{diag}( \underset{M}{\underbrace{\mathbf{\tilde{\Omega}}_{k},\cdots ,\mathbf{\tilde{\Omega}}_{k}}} )$.

As for the data detection, the received data signal at the CPU is
\begin{equation}
\underset{=\,\,\mathbf{y}}{\underbrace{[\mathbf{y}_{1}^{T},\cdots ,\mathbf{y}_{M}^{T}]^T}}=\sum_{k=1}^K{\underset{=\,\,\mathbf{H}_k}{\underbrace{[\mathbf{H}_{1k}^{T},\cdots ,\mathbf{H}_{Mk}^{T}]^T}}\mathbf{P}_k\mathbf{x}_k}+\underset{=\,\,\mathbf{n}}{\underbrace{[\mathbf{n}_{1}^{T},\cdots ,\mathbf{n}_{M}^{T}]^T}}
\end{equation}
%
or a compact form as
\begin{align}\label{eq:Level4_y}
\mathbf{y}&=\sum_{k=1}^K{\mathbf{H}_k\mathbf{P}_k\mathbf{x}_k}+\mathbf{n}.
\end{align}
The CPU selects an arbitrary receive combining matrix $\mathbf{V}_k\in \mathbb{C}^{LM\times N}$ based on the collective channel estimates for the detection of $\mathbf{x}_k$ as
\begin{equation}\label{eq:Level4_Sk}
\begin{aligned}
\mathbf{\check{x}}_k&=\mathbf{V}_{k}^{H}\mathbf{y}\\
&=\mathbf{V}_{k}^{H}\mathbf{\hat{H}}_k\mathbf{P}_k\mathbf{x}_k \!+\!\mathbf{V}_{k}^{H}\mathbf{\tilde{H}}_k\mathbf{P}_k\mathbf{x}_k\!+\!\sum_{l\ne k}^K{\mathbf{V}_{k}^{H}\mathbf{H}_l\mathbf{P}_l\mathbf{x}_l}\!+\!\mathbf{V}_{k}^{H}\mathbf{n}.
\end{aligned}
\end{equation}

Based on \eqref{eq:Level4_Sk}, we derive the achievable SE for Level 4 by using standard capacity lower bounds \cite{1624653} as the following corollary.
\begin{corollary}\label{SE_4}
If the MMSE estimator is used to compute channel estimates for all UEs, an achievable SE for UE $k$ in Level 4 with MMSE-SIC detectors is
\begin{equation}\label{eq:SE_4}
\mathrm{SE}_{k}^{\left( 4 \right)}=( 1-\frac{\tau _p}{\tau _c} ) \mathbb{E}\left\{ \log _2\left| \mathbf{I}_N+\mathbf{D}_{k,\left( 4 \right)}^{H}\mathbf{\Sigma }_{k,\left( 4 \right)}^{-1}\mathbf{D}_{k,\left( 4 \right)} \right| \right\} ,
\end{equation}
where $\mathbf{D}_{k,\left( 4 \right)}\triangleq \mathbf{V}_{k}^{H}\mathbf{\hat{H}}_k\mathbf{P}_{k}$ and
$
\mathbf{\Sigma }_{k,\left( 4 \right)}\triangleq \mathbf{V}_{k}^{H}\left( \sum_{l=1}^K{\mathbf{\hat{H}}_l\mathbf{\bar{P}}_l\mathbf{\hat{H}}_{l}^{H}}-\mathbf{\hat{H}}_k\mathbf{\bar{P}}_k\mathbf{\hat{H}}_{k}^{H}+\sum_{l=1}^K{\mathbf{C}_{l}^{\prime}}+\sigma ^2\mathbf{I}_{ML} \right) \mathbf{V}_k
$
with $\mathbf{\bar{P}}_k\triangleq \mathbf{P}_k\mathbf{P}_{k}^{H}$. Besides, $\mathbf{C}_{l}^{\prime}=\mathrm{diag}\left( \mathbf{C}_{1l}^{\prime},\cdots ,\mathbf{C}_{Ml}^{\prime} \right) \in \mathbb{C}^{ML\times ML}$ with $\left( j,q \right) $-th element of $\mathbf{C}_{ml}^{\prime}=\mathbb{E}\{ \mathbf{\tilde{H}}_{ml}\mathbf{\bar{P}}_l\mathbf{\tilde{H}}_{ml}^{H} \} $ being
$
[ \mathbf{C}_{ml}^{\prime}] _{jq} =\sum_{p_1=1}^N{\sum_{p_2=1}^N{[ \mathbf{\bar{P}}_l ] _{p_2p_1}[ \mathbf{C}_{ml}^{p_2p_1} ] _{jq}}}.
$
The expectations are with respect to all sources of randomness.
\end{corollary}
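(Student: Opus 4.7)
The plan is to invoke a conditional-Gaussian capacity lower bound on the multi-stream MIMO channel obtained after combining at the CPU, using the collective channel estimates $\{\mathbf{\hat{H}}_l\}_{l=1}^{K}$ as side information. Starting from the post-combining decomposition in \eqref{eq:Level4_Sk}, the first term $\sqrt{p_k}\mathbf{V}_k^H \mathbf{\hat{H}}_k \mathbf{P}_k^{1/2}\mathbf{x}_k = \mathbf{D}_{k,(4)}\mathbf{x}_k$ is treated as the desired signal, while the remaining three terms (residual self-interference, multi-user interference, receiver noise) are lumped into an effective noise vector $\mathbf{e}_k$. Conditioning on the estimates makes $\mathbf{V}_k$ deterministic, and since the data symbols $\{\mathbf{x}_l\}$, the estimation errors $\{\mathbf{\tilde{H}}_l\}$, and the noise $\mathbf{n}$ are mutually independent and zero-mean, $\mathbf{e}_k$ has zero conditional mean.

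The main computation is the conditional covariance $\mathbf{\Sigma}_{k,(4)}=\mathbb{E}\{\mathbf{e}_k\mathbf{e}_k^H\mid\{\mathbf{\hat{H}}_l\}\}$. The multi-user interference terms with $l\neq k$ contribute $p_l\mathbf{V}_k^H \mathbf{\hat{H}}_l \mathbf{P}_l \mathbf{\hat{H}}_l^H \mathbf{V}_k$, which after rewriting $\sum_{l\neq k}(\cdot)=\sum_{l=1}^{K}(\cdot)-(\cdot)|_{l=k}$ accounts for the first two summations inside the parentheses of $\mathbf{\Sigma}_{k,(4)}$. The estimation-error term yields $\sum_{l=1}^{K}p_l\mathbf{V}_k^H\mathbf{C}_l'\mathbf{V}_k$ by independence of $\mathbf{\tilde{H}}_l$ and $\mathbf{\hat{H}}_l$ (the MMSE property for jointly Gaussian vectors) together with $\mathbb{E}\{\mathbf{x}_l\mathbf{x}_l^H\}=\mathbf{I}_N$; expanding $\mathbb{E}\{\mathbf{\tilde{H}}_{ml}\mathbf{P}_l\mathbf{\tilde{H}}_{ml}^H\}$ column-wise, the off-diagonal cross-covariances $\mathbf{C}_{ml}^{ni}$ with $n\neq i$ cancel because $\mathbf{P}_l$ is diagonal, leaving exactly $\mathbf{C}_{ml}'=\sum_{n=1}^{N}\eta_{ln}^{u}\mathbf{C}_{ml}^{nn}$; the block-diagonal structure of $\mathbf{C}_l'$ across APs then follows from the AP-wise independence assumed for $\mathbf{H}_{ml}$. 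Finally, the noise contributes $\sigma^2\mathbf{V}_k^H\mathbf{V}_k$, completing the stated form of $\mathbf{\Sigma}_{k,(4)}$.

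Given the instantaneous model $\mathbf{\check{s}}_k = \mathbf{D}_{k,(4)}\mathbf{x}_k + \mathbf{e}_k$ with $\mathbf{x}_k\sim\mathcal{N}_{\mathbb{C}}(\mathbf{0},\mathbf{I}_N)$ and conditional noise covariance $\mathbf{\Sigma}_{k,(4)}$, the worst-case Gaussian-noise argument lower-bounds the conditional mutual information by $\log_2|\mathbf{I}_N+\mathbf{D}_{k,(4)}^H\mathbf{\Sigma}_{k,(4)}^{-1}\mathbf{D}_{k,(4)}|$, and this bound is achievable by MMSE-SIC decoding applied to the $N$ streams of UE $k$. Taking expectation over the channel estimates and applying the pre-log factor $1-\tau_p/\tau_c$ (pilot overhead within a coherence block of length $\tau_c$) produces \eqref{eq:SE_4}. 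I expect the most delicate step to be the collapse of the column-wise cross terms $\mathbf{C}_{ml}^{ni}$ ($n\neq i$) and the justification that estimate/error independence still holds in the block-correlated Weichselberger setting after the collective stacking at the CPU; once that is settled, the remainder is bookkeeping of independent zero-mean summands.
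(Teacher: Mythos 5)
Your proposal is correct and follows essentially the same route as the paper: both treat \eqref{eq:Level4_Sk} as a deterministic-signal-plus-effective-noise model conditioned on the estimates, compute the conditional covariance of the interference-plus-noise (with the cross terms vanishing by the independence of the MMSE estimate and its error, and $\mathbf{C}_{ml}^{\prime}$ collapsing to $\sum_n \eta_{ln}^{u}\mathbf{C}_{ml}^{nn}$ because $\mathbf{P}_l$ is diagonal), and then apply the standard worst-case-Gaussian capacity lower bound with the pilot-overhead pre-log factor. The only cosmetic difference is that the paper re-derives that bound explicitly via the differential-entropy/LMMSE-error argument and a final matrix inversion lemma step, whereas you invoke the known result directly.
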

\begin{proof}
The proof is given in Appendix \ref{appendix_proof_coro_1}.
\end{proof}

Note that any receive combining matrix $\mathbf{V}_k$ can be utilized in \eqref{eq:SE_4} and the SE can be computed by the Monte-Carlo method.  The CPU can use all CSI to design $\mathbf{V}_k$ and we consider two combining schemes for Level 4: MR combining $\mathbf{V}_{k}=\mathbf{\hat{H}}_{k}$ and MMSE combining which minimizes the mean-squared error $\mathrm{MSE}_k=\mathbb{E}\{ \| \mathbf{x}_k-\mathbf{V}_{k}^{H}\mathbf{y} \| ^2| \mathbf{\hat{H}}_k \}$ as
\begin{equation}\label{eq:MMSE_Com_4}
\mathbf{V}_k=\left( \sum_{l=1}^K{\left( \mathbf{\hat{H}}_l\mathbf{\bar{P}}_l\mathbf{\hat{H}}_{l}^{H}+\mathbf{C}_{l}^{\prime} \right)}+\sigma ^2\mathbf{I}_{ML} \right) ^{-1}\mathbf{\hat{H}}_k\mathbf{P}_k,
\end{equation}
where $\mathbf{\hat{H}}_l=[ \mathbf{\hat{H}}_{1l}^T,\cdots ,\mathbf{\hat{H}}_{Ml}^T ]^T \in \mathbb{C}^{ML\times N}$.
\begin{proof}
The proof of \eqref{eq:MMSE_Com_4} can be easily derived following from the standard results of matrix derivation in \cite{hjorungnes2011complex} and is therefore omitted.
\end{proof}

In the scenario with single-antenna UEs, so-called MMSE combining can maximize the achievable SE \cite{[162]}. In the scenario with multi-antenna UEs, we prove that the MMSE combining matrix in \eqref{eq:MMSE_Com_4} is also the optimal combining matrix that maximizes \eqref{eq:SE_4}, which can be obtained as the following theorem.\footnote{Note that Theorem~\ref{Theorem1} shows the optimality of MMSE combining for maximizing $\mathrm{SE}_{k}^{\left( 4 \right)}$ and absolutely provides the guidance for the implementation of CF mMIMO with multi-antenna UEs.}
\begin{theorem}\label{Theorem1}
MMSE combining matrix as \eqref{eq:MMSE_Com_4}
is the optimal combining matrix leading to the maximum SE value given as
\begin{align}\label{eq:SE_max}
\mathrm{SE}_{k}^{\left( 4 \right)}=\left( 1-\frac{\tau _p}{\tau _c} \right) \mathbb{E} \left\{ \log _2\left| \mathbf{I}_N+\mathbf{P}_{k}^{H}\mathbf{\hat{H}}_{k}^{H}\left( \sum_{l=1,l\ne k}^K{\mathbf{\hat{H}}_l\mathbf{\bar{P}}_l\mathbf{\hat{H}}_{l}^{H}}+\sum_{l=1}^K{\mathbf{C}_{l}^{\prime}}+\sigma ^2\mathbf{I}_{ML} \right) ^{-1}\mathbf{\hat{H}}_k\mathbf{P}_k \right| \right\}.
\end{align}
\end{theorem}
\begin{proof}
The proof is given in Appendix \ref{appendix_Proof_Th_1}.
\end{proof}

\begin{figure*}[t!]
\setcounter{equation}{18}

\hrule
\end{figure*}

\subsection{Level 3: Local Processing \& Large-Scale Fading Decoding} \label{Level3_Section}

In this subsection, we investigate a two-layer decoding scheme. Let $\mathbf{V}_{mk}\in \mathbb{C}^{L\times N}$ denote the combining matrix designed by AP $m$ for UE $k$. Then, the local estimate of $\mathbf{s}_k$ at AP $m$ is
\begin{equation}\label{eq:S_kn_3}
\begin{split}
\mathbf{\tilde{x}}_{mk}&=\mathbf{V}_{mk}^{H}\mathbf{y}_m\\
&=\mathbf{V}_{mk}^{H}\mathbf{H}_{mk}\mathbf{P}_k\mathbf{x}_k+\sum_{l=1,l\ne k}^K{\mathbf{V}_{mk}^{H}\mathbf{H}_{ml}\mathbf{P}_l\mathbf{x}_l}+\mathbf{V}_{mk}^{H}\mathbf{n}_m.
\end{split}
\end{equation}
We notice that \eqref{eq:S_kn_3} holds for any combining vector and AP $m$ exploits its local estimate $\mathbf{\hat{H}}_{mk}$ to design $\mathbf{V}_{mk}$. One possible choice is MR combining $\mathbf{V}_{mk}=\mathbf{\hat{H}}_{mk}$. Besides, Local MMSE (L-MMSE) combining matrix that minimizes $\mathrm{MSE}_{mk}=\mathbb{E}\{ \| \mathbf{x}_k-\mathbf{V}_{mk}^{H}\mathbf{y}_m \| ^2| \mathbf{\hat{H}}_{mk} \}
$ is
\begin{equation}\label{eq:L_MMSE_3}
\begin{aligned}
\mathbf{V}_{mk}=\left( \sum_{l=1}^K{\left( \mathbf{\hat{H}}_{ml}\mathbf{\bar{P}}_l\mathbf{\hat{H}}_{ml}^{H}+\mathbf{C}_{ml}^{'} \right)}+\sigma ^2\mathbf{I}_L \right) ^{-1}\mathbf{\hat{H}}_{mk}\mathbf{P}_k.
\end{aligned}
\end{equation}

\begin{proof}
The proof of \eqref{eq:L_MMSE_3} is similar to the proof of \eqref{eq:MMSE_Com_4} and is therefore omitted.
\end{proof}

\begin{remark}
Note that MMSE combining matrix in Level 4 not only minimizes the MSE but maximizes the achievable SE. A similar combining scheme called L-MMSE combining is proposed based on the local CSI at each AP. Note that L-MMSE combing can maximize the achievable SE if the corresponding AP is the only one decoding the signal (this scenario will be investigated as a ``small-cell network"). So L-MMSE combing scheme is a heuristic combining scheme in this Level but undoubtedly makes sense.
\end{remark}

A second layer decoding structure called ``LSFD" is implemented \cite{[162],8809413}. The local estimates $\mathbf{\tilde{s}}_{mk}$ are sent to the CPU where they are weighted by the LSFD coefficient matrix as
\begin{equation}\label{eq:S_kn_LSFD}
\begin{aligned}
\mathbf{\hat{x}}_{k}=\sum_{m=1}^M{\mathbf{A}_{mk}^{H}\mathbf{\tilde{x}}_{mk}}=\sum_{m=1}^M{\mathbf{A}_{mk}^{H}\mathbf{V}_{mk}^{H}\mathbf{H}_{mk}\mathbf{P}_k\mathbf{x}_k}+\sum_{m=1}^M{\sum_{l=1,l\ne k}^K{\mathbf{A}_{mk}^{H}\mathbf{V}_{mk}^{H}\mathbf{H}_{ml}\mathbf{P}_l\mathbf{x}_l}+}\mathbf{n}_{k}^{\prime},
\end{aligned}
\end{equation}
where $\mathbf{A}_{mk}\in \mathbb{C}^{N\times N}$ is the complex LSFD coefficient matrix for AP $m$ and UE $k$ and $\mathbf{n}_{k}^{\prime}=\sum_{m=1}^M{\mathbf{A}_{mk}^{H}\mathbf{V}_{mk}^{H}\mathbf{n}_m}$.

Then we define $\mathbf{A}_k\triangleq [ \mathbf{A}_{1k}^T,\cdots ,\mathbf{A}_{Mk}^T ]^T \in \mathbb{C}^{MN\times N}$ and $\mathbf{G}_{kl}\triangleq \left[ \mathbf{V}_{1k}^{H}\mathbf{H}_{1l};\cdots ;\mathbf{V}_{Mk}^{H}\mathbf{H}_{Ml} \right] \in \mathbb{C}^{MN\times N}$, so \eqref{eq:S_kn_LSFD} can be rewritten as
\begin{equation}\label{eq:S_kn_LSFD_1}
\mathbf{\hat{x}}_k=\mathbf{A}_{k}^{H}\mathbf{G}_{kk}\mathbf{P}_k\mathbf{x}_k+\sum_{l=1,l\ne k}^K{\mathbf{A}_{k}^{H}\mathbf{G}_{kl}\mathbf{P}_l\mathbf{x}_l}+\mathbf{n}_{k}^{\prime}.
\end{equation}

Note that only channel statistics are available at the CPU so we apply the classical use-and-then-forget (UatF) bound \cite{8187178} to compute the achievable SE for the LSFD scheme as the following corollary.

\begin{corollary}
An achievable SE for UE $k$ in Level 3 with MMSE-SIC detectors is
\begin{equation}\label{eq:SE_3}
\mathrm{SE}_{k}^{\left( 3 \right)}=\left( 1-\frac{\tau _p}{\tau _c} \right) \log _2\left| \mathbf{I}_N+\mathbf{D}_{k,\left( 3 \right)}^{H}\mathbf{\Sigma }_{k,\left( 3 \right)}^{-1}\mathbf{D}_{k,\left( 3 \right)} \right|,
\end{equation}
where $\mathbf{D}_{k,( 3 )}\triangleq \mathbf{A}_{k}^{H}\mathbb{E}\{ \mathbf{G}_{kk}\} \mathbf{P}_{k}$ and $\mathbf{\Sigma }_{k,( 3)}\triangleq \sum_{l=1}^K\mathbf{A}_{k}^{H}\mathbb{E}\{ \mathbf{G}_{kl}\mathbf{\bar{P}}_l\mathbf{G}_{kl}^{H}\} \mathbf{A}_k-\mathbf{D}_{k,( 3 )}\mathbf{D}_{k,( 3 )}^{H}+\sigma ^2\mathbf{A}_{k}^{H}\mathbf{S}_k\mathbf{A}_k$
with $\mathbf{S}_k\triangleq \mathrm{diag}( \mathbb{E}\{ \mathbf{V}_{1k}^{H}\mathbf{V}_{1k} \} ,\cdots ,\mathbb{E}\{ \mathbf{V}_{Mk}^{H}\mathbf{V}_{Mk} \} ) \in \mathbb{C}^{MN\times MN}$.

\end{corollary}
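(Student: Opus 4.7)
The plan is to invoke the standard ``use-and-then-forget'' capacity bound (a variant of the Hassibi--Hochwald / Medard bound widely used in cell-free and cellular massive MIMO analyses with imperfect CSI) applied to the post-LSFD signal \eqref{eq:S_kn_LSFD_1}. Because the CPU performs LSFD using only channel statistics (the combiners $\mathbf{V}_{mk}$ depend on the local estimates, but $\mathbf{A}_{mk}$ depends only on long-term quantities), the effective channel seen at the second-layer output must be split into its mean, which plays the role of a known deterministic channel, and a zero-mean fluctuation that is absorbed into an effective Gaussian interference term.

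Concretely, I would first add and subtract $\sqrt{p_k}\mathbf{A}_k^{H}\mathbb{E}\{\mathbf{G}_{kk}\}\mathbf{P}_k^{1/2}\mathbf{x}_k$ in \eqref{eq:S_kn_LSFD_1} to isolate the ``deterministic'' desired signal $\mathbf{D}_{k,(3)}\mathbf{x}_k$. The residual self-term $\sqrt{p_k}\mathbf{A}_k^{H}(\mathbf{G}_{kk}-\mathbb{E}\{\mathbf{G}_{kk}\})\mathbf{P}_k^{1/2}\mathbf{x}_k$, the multiuser interference $\sqrt{p_l}\mathbf{A}_k^{H}\mathbf{G}_{kl}\mathbf{P}_l^{1/2}\mathbf{x}_l$ for $l\neq k$, and the filtered noise $\mathbf{n}_k'$ are then collected into a single disturbance vector $\mathbf{w}_k$. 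Because $\mathbf{x}_l$ for $l\neq k$ is independent of everything else and the residual self-term is zero-mean and uncorrelated with $\mathbf{x}_k$, one checks by a direct second-moment computation that $\mathbb{E}\{\mathbf{w}_k\mathbf{x}_k^{H}\}=\mathbf{0}$, so $\mathbf{w}_k$ can be treated as uncorrelated ``noise'' with respect to the desired symbol.

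Next I would compute the covariance of $\mathbf{w}_k$. The interference contribution of stream $l$ gives $p_l\mathbf{A}_k^{H}\mathbb{E}\{\mathbf{G}_{kl}\mathbf{P}_l\mathbf{G}_{kl}^{H}\}\mathbf{A}_k$ for $l\neq k$, while the self-uncertainty yields exactly $p_k\mathbf{A}_k^{H}\mathbb{E}\{\mathbf{G}_{kk}\mathbf{P}_k\mathbf{G}_{kk}^{H}\}\mathbf{A}_k-\mathbf{D}_{k,(3)}\mathbf{D}_{k,(3)}^{H}$; summing over all $l$ produces $\sum_{l=1}^{K}p_l\mathbf{A}_k^{H}\mathbb{E}\{\mathbf{G}_{kl}\mathbf{P}_l\mathbf{G}_{kl}^{H}\}\mathbf{A}_k - \mathbf{D}_{k,(3)}\mathbf{D}_{k,(3)}^{H}$. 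For the noise, since $\mathbf{n}_m$ and $\mathbf{n}_{m'}$ are independent across APs, the covariance $\mathbb{E}\{\mathbf{n}_k'(\mathbf{n}_k')^{H}\}$ becomes $\sigma^{2}\sum_{m=1}^{M}\mathbf{A}_{mk}^{H}\mathbb{E}\{\mathbf{V}_{mk}^{H}\mathbf{V}_{mk}\}\mathbf{A}_{mk}$, which is precisely $\sigma^{2}\mathbf{A}_k^{H}\mathbf{S}_k\mathbf{A}_k$ by the block-diagonal structure of $\mathbf{S}_k$. Adding the three contributions yields $\mathbf{\Sigma}_{k,(3)}$.

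Finally, I would apply the well-known lower bound that replaces the true (non-Gaussian) disturbance $\mathbf{w}_k$ by a Gaussian with the same covariance — which is optimal in the worst-case-noise sense and gives a valid achievable rate — and invoke the MIMO capacity formula with MMSE-SIC decoding at the CPU over the $N$ streams of UE $k$, producing $\log_2|\mathbf{I}_N+\mathbf{D}_{k,(3)}^{H}\mathbf{\Sigma}_{k,(3)}^{-1}\mathbf{D}_{k,(3)}|$. Multiplying by the pre-log factor $(1-\tau_p/\tau_c)$ accounts for pilot overhead. The main obstacle is the careful bookkeeping in step two/three: verifying that the residual channel-uncertainty term is indeed uncorrelated with $\mathbf{x}_k$ (so that the worst-case Gaussian-noise bound applies) and that the cross-AP noise correlations vanish so that the noise covariance collapses to $\sigma^2\mathbf{A}_k^{H}\mathbf{S}_k\mathbf{A}_k$; once these decouplings are established, the SE formula follows from the standard Gaussian MIMO capacity lower bound, and no expectation over the outer $\log\det$ is needed because $\mathbf{A}_k$, $\mathbb{E}\{\mathbf{G}_{kl}\mathbf{P}_l\mathbf{G}_{kl}^{H}\}$ and $\mathbf{S}_k$ all depend on statistics only.
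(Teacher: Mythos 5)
Your proposal is correct and follows essentially the route the paper intends (the paper omits this proof, deferring to the Appendix~A argument): since the CPU's second-layer decoder only has statistical CSI, the effective channel is split into its mean $\mathbf{A}_k^{H}\mathbb{E}\{\mathbf{G}_{kk}\}$ plus a zero-mean fluctuation, the fluctuation, multiuser interference, and filtered noise are verified to be uncorrelated with $\mathbf{x}_k$, and the worst-case-Gaussian-noise (use-and-then-forget) bound with MMSE-SIC yields the deterministic $\log_2\left| \mathbf{I}_N+\mathbf{D}_{k,(3)}^{H}\mathbf{\Sigma}_{k,(3)}^{-1}\mathbf{D}_{k,(3)} \right|$. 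Your covariance bookkeeping — in particular the self-uncertainty term $p_k\mathbf{A}_k^{H}\mathbb{E}\{\mathbf{G}_{kk}\mathbf{P}_k\mathbf{G}_{kk}^{H}\}\mathbf{A}_k-\mathbf{D}_{k,(3)}\mathbf{D}_{k,(3)}^{H}$ and the collapse of the noise covariance to $\sigma^2\mathbf{A}_k^{H}\mathbf{S}_k\mathbf{A}_k$ via cross-AP independence — matches the quantities in the corollary exactly.
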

\begin{proof}
The proof of \eqref{eq:SE_3} is similar to the proof of \eqref{eq:MMSE_Com_4} and is therefore omitted.
\end{proof}

The complex LSFD coefficient matrix $\mathbf{A}_k$ can be optimized by the CPU to maximize the achievable SE of UE $k$ for Level 3 in \eqref{eq:SE_3} as the following theorem.
\begin{theorem}\label{theorem2}
The achievable SE in \eqref{eq:SE_3} is maximized by
\begin{equation}\label{LSFD_op}
\mathbf{A}_k=\left( \sum_{l=1}^K{\mathbb{E}\left\{ \mathbf{G}_{kl}\mathbf{\bar{P}}_l\mathbf{G}_{kl}^{H} \right\}}+\sigma ^2\mathbf{S}_k \right) ^{-1}\mathbb{E}\left\{ \mathbf{G}_{kk} \right\} \mathbf{P}_k,
\end{equation}
which leads to the maximum value as
\begin{align}\label{SE_max_LSFD}
\mathrm{SE}_{k}^{\left( 3 \right)}=\left( 1-\frac{\tau _p}{\tau _c} \right) \log _2\left| \mathbf{I}_N+\mathbf{P}_k^{H}\mathbb{E} \left\{ \mathbf{G}_{kk}^{H} \right\} \left( \sum_{l=1}^K{\mathbb{E} \left\{ \mathbf{G}_{kl}\mathbf{\bar{P}}_l\mathbf{G}_{kl}^{H} \right\}}-\mathbb{E} \left\{ \mathbf{G}_{kk} \right\} \mathbf{\bar{P}}_k\mathbb{E} \left\{ \mathbf{G}_{kk}^{H} \right\} +\sigma ^2\mathbf{S}_k \right) ^{-1}\mathbb{E} \left\{ \mathbf{G}_{kk} \right\} \mathbf{P}_k \right|
\end{align}
\end{theorem}
\begin{proof}
The proof is similar to the proof of Theorem~\ref{Theorem1} and is therefore omitted.
\end{proof}

Note that the CPU is only aware of channel statistics, so the optimal LSFD coefficient matrix in \eqref{LSFD_op} can also minimize the conditional MSE of UE $k$ $\mathrm{MSE}_{k}^{\mathrm{LSFD}}=\mathbb{E} \left\{ \left\| \mathbf{x}_k-\mathbf{\hat{x}}_k \right\| ^2\left| \mathbf{\Theta } \right. \right\} $, which is proven in Appendix~\ref{appendix_minimization_LSFD} with $\mathbf{\Theta }$ being all the channel statistics. Furthermore, if MR combining $\mathbf{V}_{mk}=\mathbf{\hat{H}}_{mk}$ is adopted at each AP in the first layer, we can compute the expectations in \eqref{eq:SE_3} in closed-form and derive the closed-form SE expression as the following theorem.

\begin{theorem}\label{theorem3}
For MR combining $\mathbf{V}_{mk}=\mathbf{\hat{H}}_{mk}$, the achievable SE for UE $k$ in Level 3 with MMSE-SIC detectors can be computed in closed-form as
\setcounter{equation}{26}
\begin{equation}\label{eq:SE_3_closed_form}
\mathrm{SE}_{k}^{\left( 3 \right)}=\left( 1-\frac{\tau _p}{\tau _c} \right) \log _2\left| \mathbf{I}_N+\mathbf{D}_{k,\left( 3 \right)}^{H}\mathbf{\Sigma }_{k,\left( 3 \right)}^{-1}\mathbf{D}_{k,\left( 3 \right)} \right|,
\end{equation}
where $\mathbf{D}_{k,( 3 )}=\mathbf{A}_{k}^{H}\mathbf{Z}_k\mathbf{P}_{k}$ and $\mathbf{\Sigma }_{k,( 3 )}=\mathbf{A}_{k}^{H}( \sum_{l=1}^K\mathbf{T}_{kl,( 1 )}^{\mathrm{L}3}+\sum_{l\in \mathcal{P}_k}\mathbf{T}_{kl,( 2 )}^{\mathrm{L}3} ) \mathbf{A}_k-\mathbf{D}_{k,( 3 )}\mathbf{D}_{k,( 3 )}^{H}+\sigma ^2\mathbf{A}_{k}^{H}\mathbf{S}_k\mathbf{A}_k$ with $\mathbf{Z}_k=[ \mathbf{Z}_{1k}^T,\cdots ,\mathbf{Z}_{Mk}^T ]^T \in \mathbb{C}^{MN\times N}$ and $\mathbf{S}_k =\mathrm{diag}\left( \mathbf{Z}_{1k},\cdots ,\mathbf{Z}_{Mk} \right) \in \mathbb{C}^{MN\times MN}$ with the $\left( n,n^{\prime} \right)$-th element of $\mathbf{Z}_{mk}\in \mathbb{C}^{N\times N}$ being $\left[ \mathbf{Z}_{mk} \right]_{nn^{\prime}}=\mathrm{tr}( \mathbf{\hat{R}}_{mk}^{n^{\prime}n} )$. Moreover, $\mathbf{T}_{kl,\left( 1 \right)}^{\mathrm{L}3}\triangleq\mathrm{diag}( \mathbf{\Gamma }_{kl,1}^{\left( 1 \right)},\cdots ,\mathbf{\Gamma }_{kl,M}^{\left( 1 \right)} ) \in \mathbb{C}^{MN\times MN}$ and the $( m,m^{\prime})$-th submatrix of $\mathbf{T}_{kl,\left( 2 \right)}^{\mathrm{L}3}\in \mathbb{C} ^{MN\times MN}$ is
\begin{equation}
\vspace*{-0.1cm}
\begin{aligned}
\mathbf{T}_{kl,\left( 2 \right)}^{\mathrm{L}3,mm^{\prime}}=\left\{ \begin{array}{c}
	\mathbf{\Gamma }_{kl,m}^{\left( 2 \right)}-\mathbf{\Gamma }_{kl,m}^{\left( 1 \right)},   m=m^{\prime}\\
	\mathbf{\Lambda }_{mkl}\mathbf{\bar{P}}_{l}\mathbf{\Lambda }_{m^{\prime}lk},  m\ne m^{\prime}\\
\end{array} \right.
\end{aligned}
\end{equation}
where the $( n,n^{\prime})$-th element of $\mathbf{\Lambda }_{mkl}\in \mathbb{C}^{N\times N}$, $\mathbf{\Lambda }_{m^{\prime}lk}\in \mathbb{C}^{N\times N}$, $\mathbf{\Gamma }_{kl,m}^{\left( 1 \right)}\in \mathbb{C}^{N\times N}$ and $\mathbf{\Gamma }_{kl,m}^{\left( 2 \right)}\in \mathbb{C}^{N\times N}$ are $\left[ \mathbf{\Lambda }_{mkl} \right] _{nn^{\prime} }=\mathrm{tr}\left( \mathbf{\Theta }_{mkl}^{n^{\prime} n} \right)$, $\left[ \mathbf{\Lambda }_{m^{\prime}lk} \right] _{nn^{\prime}}=\mathrm{tr}\left( \mathbf{\Theta }_{m^{\prime}lk}^{n^{\prime}n} \right)$, $[ \mathbf{\Gamma }_{kl,m}^{( 1 )} ] _{nn^{\prime}}=\sum_{i=1}^N{\sum_{i^{\prime}=1}^N{[ \mathbf{\bar{P}}_{l} ] _{i^{\prime}i}\mathrm{tr}( \mathbf{R}_{ml}^{i^{\prime}i}\mathbf{\hat{R}}_{mk}^{n^{\prime}n} )}}$, and
\begin{equation}\label{eq:Gama2}
\begin{aligned}
\left[ \mathbf{\Gamma }_{kl,m}^{\left( 2 \right)} \right] _{nn^{\prime}}&=\sum_{i=1}^N{\sum_{i^{\prime}=1}^N{\left[ \mathbf{\bar{P}}_l \right] _{i^{\prime}i}\left\{ \mathrm{tr}\left( \mathbf{R}_{ml}^{i^{\prime}i}\mathbf{F}_{mkl,\left( 1 \right)}^{n^{\prime}n} \right) \right.}}\\
&\left. +\tau _{p}^{2}\sum_{q_1=1}^N{\sum_{q_2=1}^N{\left[ \mathrm{tr}\left( \mathbf{\tilde{F}}_{mkl,\left( 2 \right)}^{q_1n}\mathbf{\tilde{R}}_{ml}^{i^{\prime}q_2}\mathbf{\tilde{R}}_{ml}^{q_2i}\mathbf{\tilde{F}}_{mkl,\left( 2 \right)}^{n^{\prime}q_1} \right) +\mathrm{tr}\left( \mathbf{\tilde{F}}_{mkl,\left( 2 \right)}^{q_1n}\mathbf{\tilde{R}}_{ml}^{i^{\prime}q_2} \right) \mathrm{tr}\left( \mathbf{\tilde{F}}_{mkl,\left( 2 \right)}^{n^{\prime}q_2}\mathbf{\tilde{R}}_{ml}^{q_2i} \right) \right]}} \right\},
\end{aligned}
\end{equation}
with $\mathbf{\Theta }_{mkl}=\tau _p\mathbf{R}_{ml}\mathbf{\tilde{P}}_{l}^{H}\mathbf{\Psi }_{mk}^{-1}\mathbf{\tilde{P}}_{k}\mathbf{R}_{mk}$, $\mathbf{\Theta }_{m^{\prime}lk}=\tau _p\mathbf{R}_{m^{\prime}k}\mathbf{\tilde{P}}_{k}^{H}\mathbf{\Psi }_{m^{\prime}k}^{-1}\mathbf{\tilde{P}}_{l}\mathbf{R}_{m^{\prime}l}$, $\mathbf{F}_{mkl,( 1 )}=\tau _p\mathbf{S}_{mk}( \mathbf{\Psi }_{mk}-\tau _p\mathbf{\tilde{P}}_{l}\mathbf{R}_{ml}\mathbf{\tilde{P}}_{l}^{H} ) \mathbf{S}_{mk}^{H}$, $\mathbf{S}_{mk}=\mathbf{R}_{mk}\mathbf{\tilde{P}}_{k}^{H}\mathbf{\Psi }_{mk}^{-1}$, $\mathbf{F}_{mkl,( 2 )}=\mathbf{S}_{mk}\mathbf{\tilde{P}}_{l}\mathbf{R}_{ml}\mathbf{\tilde{P}}_{l}^{H}\mathbf{S}_{mk}^{H}$, $\mathbf{\tilde{R}}_{ml}^{ni}$ and $\mathbf{\tilde{F}}_{mkl,( 2 )}^{ni}$ being $( n,i )$-submatrix of $\mathbf{R}_{ml}^{\frac{1}{2}}$ and $\mathbf{F}_{mkl,( 2 )}^{\frac{1}{2}}$, respectively.
The LSFD coefficient matrix in \eqref{LSFD_op} can also be computed in closed-form as
\setcounter{equation}{29}
\begin{equation}
\mathbf{A}_k=\left( \sum_{l=1}^K\mathbf{T}_{kl,\left( 1 \right)}^{\mathrm{L}3}+\sum_{l\in \mathcal{P}_k}\mathbf{T}_{kl,\left( 2 \right)}^{\mathrm{L}3}+\sigma ^2\mathbf{S}_k \right) ^{-1}\mathbf{Z}_k\mathbf{P}_k.
\end{equation}
\end{theorem}

\begin{proof}
The proof is given in Appendix \ref{appendix_Proof_Th_2}.
\end{proof}

\subsection{Level 2: Local Processing \& Simple Centralized Decoding}
The so-called LSFD method in Level 3 requires a number of the large-scale fading parameters knowledge which may be very large in CF mMIMO. For simplicity, the CPU can alternatively weight the local estimates $\left\{\mathbf{\tilde{x}}_{mk}\!\!:m\!=\!1,\!\cdots\!,\!M \right\}$ by taking the average of them to obtain the final decoding symbol as
$
\mathbf{\hat{x}}_{k}=\sum_{m=1}^M{\frac{1}{M}\mathbf{\tilde{x}}_{mk}}.
$
Note that $\mathbf{\hat{x}}_{k}$ can also be derived from \eqref{eq:S_kn_LSFD} by setting $\mathbf{A}_{mk}=\frac{1}{M}\mathbf{I}_N$ so we can obtain an achievable SE of UE $k$ for Level 2 as the following corollary.
\begin{corollary}
An achievable SE for UE $k$ in Level 2 with MMSE-SIC detectors is
\begin{equation}\label{eq:SE_2}
\mathrm{SE}_{k}^{\left( 2 \right)}=\left( 1-\frac{\tau _p}{\tau _c} \right)\log _2\left| \mathbf{I}_N+\mathbf{D}_{k,\left( 2 \right)}^{H}\mathbf{\Sigma }_{k,\left( 2 \right)}^{-1}\mathbf{D}_{k,\left( 2 \right)} \right|,
\end{equation}
where $\mathbf{D}_{k,\left( 2 \right)}\triangleq \sum\nolimits_{m=1}^M{\mathbb{E}\{ \mathbf{V}_{mk}^{H}\mathbf{H}_{mk}} \} \mathbf{P}_{k}$ and $\mathbf{\Sigma }_{k,\left( 2 \right)}\triangleq\sum_{l=1}^K\sum\nolimits_{m=1}^M{\sum\nolimits_{m^{\prime}=1}^M{\mathbb{E}\{ \mathbf{V}_{mk}^{H}\mathbf{H}_{ml}\mathbf{\bar{P}}_l\mathbf{H}_{m^{\prime}l}^{H}\mathbf{V}_{m^{\prime}k}}} \} -\mathbf{D}_{k,\left( 2 \right)}\mathbf{D}_{k,\left( 2 \right)}^{H}+\sum\nolimits_{m=1}^M\mathbb{E}\{ {\mathbf{V}_{mk}^{H}\mathbf{n}_m\mathbf{n}_{m}^{H}\mathbf{V}_{mk}} \}$.
\end{corollary}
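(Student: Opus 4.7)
The plan is to derive \eqref{eq:SE_2} as a direct specialization of the Level 3 achievable SE by setting $\mathbf{A}_{mk}=\frac{1}{M}\mathbf{I}_N$ for all $m$, and then re-scaling. Since the CPU decision $\mathbf{\hat s}_k$ in \eqref{eq:S_kn_Level2} is obtained from \eqref{eq:S_kn_LSFD} under exactly this LSFD choice, the same capacity lower bound used to prove \eqref{eq:SE_3} applies verbatim. In particular, multiplying the decision statistic by the scalar $M$ does not change mutual information, so we may equivalently work with $M\mathbf{\hat s}_k=\sum_{m=1}^M\mathbf{\tilde s}_{mk}$; this is why the explicit $\frac{1}{M}$ factors disappear from $\mathbf{D}_{k,(2)}$ and $\mathbf{\Sigma}_{k,(2)}$.

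Concretely, I would first substitute \eqref{eq:S_kn_3} into \eqref{eq:S_kn_Level2} and decompose the result into a deterministic effective channel acting on $\mathbf{s}_k$ plus a zero-mean residual:
\begin{equation}\notag
M\mathbf{\hat s}_k=\mathbb{E}\!\left\{\sum_{m=1}^{M}\mathbf{V}_{mk}^{H}\mathbf{H}_{mk}\right\}\mathbf{s}_k+\underbrace{\left(\sum_{m=1}^{M}\mathbf{V}_{mk}^{H}\mathbf{H}_{mk}-\mathbb{E}\!\left\{\sum_{m=1}^{M}\mathbf{V}_{mk}^{H}\mathbf{H}_{mk}\right\}\right)\mathbf{s}_k}_{\text{beamforming uncertainty}}+\sum_{l\ne k}\sum_{m=1}^{M}\mathbf{V}_{mk}^{H}\mathbf{H}_{ml}\mathbf{s}_l+\sum_{m=1}^{M}\mathbf{V}_{mk}^{H}\mathbf{n}_m.
\end{equation}
Then I would invoke the standard MMSE-SIC lower bound for a deterministic MIMO channel with Gaussian inputs and worst-case uncorrelated Gaussian noise (as used in the proofs of Corollary~\ref{SE_4} and of \eqref{eq:SE_3}): the three residual terms are zero-mean and uncorrelated with the desired signal, so treating their sum as independent Gaussian noise yields an achievable rate of $\log_2|\mathbf I_N+\mathbf D_{k,(2)}^H\mathbf\Sigma_{k,(2)}^{-1}\mathbf D_{k,(2)}|$ per coherence block, with the $(1-\tau_p/\tau_c)$ prelog accounting for the training overhead.

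It only remains to identify the effective channel $\mathbf{D}_{k,(2)}$ and the covariance $\mathbf{\Sigma}_{k,(2)}$ of the aggregated noise. The desired term immediately yields $\mathbf{D}_{k,(2)}=\sqrt{p_k}\,\mathbb{E}\{\sum_{m}\mathbf{V}_{mk}^{H}\mathbf{H}_{mk}\}\mathbf{P}_k^{1/2}$ because $\mathbf{s}_k=\sqrt{p_k}\mathbf{P}_k^{1/2}\mathbf{x}_k$ with $\mathbf{x}_k\sim\mathcal{N}_{\mathbb{C}}(\mathbf 0,\mathbf I_N)$. The beamforming-uncertainty covariance equals $p_k\mathbb{E}\{\sum_{m,m'}\mathbf{V}_{mk}^{H}\mathbf{H}_{mk}\mathbf{P}_k\mathbf{H}_{m'k}^{H}\mathbf{V}_{m'k}\}-\mathbf{D}_{k,(2)}\mathbf{D}_{k,(2)}^{H}$, which after merging with the multi-user interference term (using independence of $\mathbf{s}_l$ across $l$) gives the $\sum_{l=1}^{K}p_l\mathbb{E}\{\cdot\}-\mathbf{D}_{k,(2)}\mathbf{D}_{k,(2)}^H$ structure stated in the corollary. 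Finally, the receiver-noise term contributes $\mathbb{E}\{\sum_m\mathbf{V}_{mk}^H\mathbf n_m\mathbf n_m^H\mathbf V_{mk}\}$ because $\mathbf{n}_m$ is independent across APs. No step is really a ``hard part'': the entire argument is a routine bookkeeping specialization of Theorem~\ref{theorem2}'s SE expression to the fixed, CSI-independent weights $\mathbf{A}_{mk}=\frac{1}{M}\mathbf{I}_N$, and the only mild subtlety is the $1/M$ rescaling that leaves the SINR (and hence the SE) invariant.
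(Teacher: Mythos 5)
Your proposal is correct and follows essentially the same route as the paper: the paper itself obtains this corollary by observing that \eqref{eq:S_kn_Level2} is \eqref{eq:S_kn_LSFD} with $\mathbf{A}_{mk}=\frac{1}{M}\mathbf{I}_N$ and then specializing the Level~3 bound, exactly as you do, with the $1/M$ factors cancelling in $\mathbf{D}_{k,(2)}^{H}\mathbf{\Sigma}_{k,(2)}^{-1}\mathbf{D}_{k,(2)}$ since $\mathbf{D}$ scales as $1/M$ and $\mathbf{\Sigma}$ as $1/M^{2}$. Your explicit decomposition into a deterministic effective channel plus zero-mean, uncorrelated residuals (beamforming uncertainty, interference, noise) and the worst-case Gaussian argument is precisely the use-and-then-forget bound the paper invokes implicitly for Levels 3 and 2.
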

Any combining scheme like MR combining $\mathbf{V}_{mk}=\mathbf{\hat{H}}_{mk}$ or L-MMSE combining as \eqref{eq:L_MMSE_3} is available for \eqref{eq:SE_2}. If MR combining is applied at each AP, we can also derive the closed-form SE expression as the following theorem.
\begin{theorem}\label{Level2_Closed_form}
For MR combining $\mathbf{V}_{mk}=\mathbf{\hat{H}}_{mk}$, we can derive the closed-form SE expression in Level 2 with MMSE-SIC detectors as
\begin{equation}\label{eq:Gama_2_Closed-form}
\mathrm{SE}_{k}^{\left( 2 \right)}=\left( 1-\frac{\tau _p}{\tau _c} \right)\log _2\left| \mathbf{I}_N+\mathbf{D}_{k,\left( 2 \right)}^{H}\mathbf{\Sigma }_{k,\left( 2 \right)}^{-1}\mathbf{D}_{k,\left( 2 \right)} \right|,
\end{equation}
where $\mathbf{D}_{k,\left( 2 \right)}=\sum_{m=1}^M {\mathbf{Z}_{mk}} \mathbf{P}_{k}$ and $\mathbf{\Sigma }_{k,\left( 2 \right)}=\sum_{l=1}^K\mathbf{T}_{kl,\left( 1 \right)}^{\mathrm{L}2}+\sum_{l\in \mathcal{P}_k}\mathbf{T}_{kl,\left( 2 \right)}^{\mathrm{L}2}-\mathbf{D}_{k,\left( 2 \right)}\mathbf{D}_{k,\left( 2 \right)}^{H}+ \sigma ^2\sum_{m=1}^M{\mathbf{Z}_{mk}}$ with $\mathbf{T}_{kl,\left( 1 \right)}^{\mathrm{L}2}=\sum_{m=1}^M{\mathbf{\Gamma }_{kl,m}^{\left( 1 \right)}}$ and
$
\mathbf{T}_{kl,\left( 2 \right)}^{\mathrm{L}2}=\sum_{m=1}^M{\left( \mathbf{\Gamma }_{kl,m}^{\left( 2 \right)}-\mathbf{\Gamma }_{kl,m}^{\left( 1 \right)} \right)}+\sum_{m=1}^M{\sum_{m^{\prime}=1,m\ne m^{\prime}}^M{\mathbf{\Lambda }_{mkl}\mathbf{\bar{P}}_l\mathbf{\Lambda }_{m^{\prime}lk}}}.
$
Definitions of matrices above are the same as that of Theorem~\ref{theorem3}.
\end{theorem}

Note that this processing scheme was widely investigated in existing works on CF mMIMO with multi-antenna UEs, e.g. \cite{113,8901451,8811486}, but relies upon overly idealistic assumption of i.i.d. Rayleigh fading channels. The results derived in this subsection hold for arbitrary channel structures (arbitrary $\mathbf{R}_{mk}$), so can easily reduce to the i.i.d. Rayleigh fading channel by changing $\mathbf{W}_{mk}$ to be \eqref{IID_Rayleigh}. Then, we have $\mathbf{H}_{mk}=\sqrt{\beta _{mk}}\mathbf{H}_{mk,\mathrm{iid}}$, $\mathbf{R}_{mk}=\beta _{mk}\mathbf{I}_{LN}$, $\mathbf{\Psi }_{mk}=\sum\nolimits_{l\in \mathcal{P} _k}^{}{\beta _{ml}\tau _p\mathbf{\tilde{\Omega}}_l}\mathbf{\tilde{\Omega}}_{l}^{H}+\sigma ^2\mathbf{I}_{LN}$ and $\mathbf{\hat{R}}_{mk}=\tau _p\beta _{mk}^{2}\mathbf{\tilde{\Omega}}_{k}^{H}\mathbf{\Psi }_{mk}^{-1}\mathbf{\tilde{\Omega}}_k$, respectively. So we can derive the corresponding SE expressions based on i.i.d. Rayleigh fading channels by plugging above terms in Theorem~\ref{Level2_Closed_form}.

\subsection{Level 1: Small-Cell Network}
As for the last processing scheme, both the channel estimation and signal decoding are implemented locally in one particular AP. The decoding can be done locally by APs using local channel estimates so nothing is exchanged to the CPU. In this case, the CF mMIMO network is truly distributed and turns into a small-cell network. As in \cite{[162]}, the macro-diversity is achieved by selecting the best AP that achieves the highest SE to a specific UE. The achievable SE of UE $k$ at Level 1 is given as follows.

\begin{corollary}
An achievable SE for UE $k$ in Level 1 with MMSE-SIC detectors is given by
\begin{align}\label{eq:SE_1}
\mathrm{SE}_{k}^{\left( 1 \right)}=\underset{m\in \left\{ 1,\cdots ,M \right\}}{\max}\underset{\triangleq \mathrm{SE}_{mk}^{\left( 1 \right)}}{\underbrace{\left( 1-\frac{\tau _p}{\tau _c} \right) \mathbb{E} \left\{ \log _2\left| \mathbf{I}_N+\mathbf{D}_{mk,\left( 1 \right)}^{H}\mathbf{\Sigma }_{mk,\left( 1 \right)}^{-1}\mathbf{D}_{mk,\left( 1 \right)} \right| \right\} }},
\end{align}
where $\mathbf{D}_{mk,\left( 1 \right)}\triangleq\mathbf{V}_{mk}^{H}\mathbf{\hat{H}}_{mk}\mathbf{P}_{k}$ and
$
\mathbf{\Sigma }_{k,\left( 1 \right)}\triangleq\mathbf{V}_{mk}^{H}( \sum_{l=1}^K{(\mathbf{\hat{H}}_{ml}\mathbf{\bar{P}}_l\mathbf{\hat{H}}_{ml}^{H}+\mathbf{C}_{ml}^{\prime})}-\mathbf{\hat{H}}_{mk}\mathbf{\bar{P}}_k\mathbf{\hat{H}}_{mk}^{H}+\sigma ^2\mathbf{I}_L) \mathbf{V}_{mk}.
$
The maximum value of $\mathrm{SE}_{mk}^{\left( 1 \right)}$ is achieved with L-MMSE combining matrix by \eqref{eq:L_MMSE_3} as \begin{equation}\label{eq:SE_mk_max}
\begin{aligned}
\mathrm{SE}_{mk}^{\left( 1 \right)}=\left( 1-\frac{\tau _p}{\tau _c} \right) \mathbb{E}\left\{ \log _2\left| \mathbf{I}_N+\mathbf{P}_{k}^{H}\mathbf{\hat{H}}_{mk}^{H}\left( \sum_{l=1,l\ne k}^K{\mathbf{\hat{H}}_{ml}\mathbf{\bar{P}}_l\mathbf{\hat{H}}_{ml}^{H}}+\sum_{l=1}^K{\mathbf{C}_{ml}^{\prime}}+\sigma ^2\mathbf{I}_L \right) ^{-1}\mathbf{\hat{H}}_{mk}\mathbf{P}_k \right| \right\}.
\end{aligned}
\end{equation}
\end{corollary}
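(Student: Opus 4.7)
The plan is to prove the two assertions of the corollary in two passes, each mirroring a previously established argument. The SE expression in \eqref{eq:SE_1} is the AP-local restriction of Corollary~\ref{SE_4}, and the L-MMSE optimality in \eqref{eq:SE_mk_max} is the AP-local restriction of Theorem~\ref{Theorem1}, so both proofs transfer almost verbatim once one recognizes that Level 1 is simply Level 4 with the collective signal $\mathbf{y}$ replaced by the single-AP observation $\mathbf{y}_m$, augmented with a ``best-AP'' selection on top.

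For the achievability of $\mathrm{SE}_{mk}^{(1)}$, I would fix $m$, start from the local decoded signal \eqref{eq:S_kn_3}, and insert the MMSE decomposition $\mathbf{H}_{ml}=\mathbf{\hat{H}}_{ml}+\mathbf{\tilde{H}}_{ml}$ with $\mathbf{\tilde{H}}_{ml}$ independent of $\mathbf{\hat{H}}_{ml}$. Treating $\mathbf{V}_{mk}^{H}\mathbf{\hat{H}}_{mk}\mathbf{s}_k$ as the useful signal and everything else as effective noise, the standard ``use-and-then-forget'' Gaussian-noise argument used in the proof of Corollary~\ref{SE_4} supplies a matrix capacity lower bound in which the conditional covariance of the noise given $\mathbf{\hat{H}}_{mk}$ evaluates to $\mathbf{\Sigma}_{mk,(1)}$ as displayed; the $\sum_l p_l\mathbf{C}_{ml}'$ contribution arises from $\mathbb{E}\{\mathbf{\tilde{H}}_{ml}\mathbf{P}_l\mathbf{\tilde{H}}_{ml}^{H}\}$ and the $-p_k\mathbf{\hat{H}}_{mk}\mathbf{P}_k\mathbf{\hat{H}}_{mk}^{H}$ correction subtracts the already-retained desired term from the $l=k$ summand. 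Taking the ergodic expectation over $\mathbf{\hat{H}}_{mk}$ and multiplying by $(1-\tau_p/\tau_c)$ yields $\mathrm{SE}_{mk}^{(1)}$. Because the small-cell network is free to route UE $k$'s data through whichever AP achieves the highest ergodic rate, the assignment $m^{\star}(k)=\arg\max_m\mathrm{SE}_{mk}^{(1)}$ is fixed offline from the channel statistics and $\max_m\mathrm{SE}_{mk}^{(1)}$ remains achievable, producing \eqref{eq:SE_1}.

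For the L-MMSE optimality, I would condition on $\mathbf{\hat{H}}_{mk}$ and set $\mathbf{B}_{mk}\triangleq\sum_{l\ne k}p_l\mathbf{\hat{H}}_{ml}\mathbf{P}_l\mathbf{\hat{H}}_{ml}^{H}+\sum_l p_l\mathbf{C}_{ml}'+\sigma^2\mathbf{I}_L$, so that the conditional log-determinant argument becomes $|\mathbf{I}_N+p_k\mathbf{P}_k^{1/2}\mathbf{\hat{H}}_{mk}^{H}\mathbf{V}_{mk}(\mathbf{V}_{mk}^{H}\mathbf{B}_{mk}\mathbf{V}_{mk})^{-1}\mathbf{V}_{mk}^{H}\mathbf{\hat{H}}_{mk}\mathbf{P}_k^{1/2}|$. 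This is a matrix generalized Rayleigh quotient of the type already treated in Appendix~\ref{appendix_Proof_Th_1}; its supremum over $\mathbf{V}_{mk}$ equals $|\mathbf{I}_N+p_k\mathbf{P}_k^{1/2}\mathbf{\hat{H}}_{mk}^{H}\mathbf{B}_{mk}^{-1}\mathbf{\hat{H}}_{mk}\mathbf{P}_k^{1/2}|$ and is attained whenever $\mathbf{V}_{mk}$ equals $\mathbf{B}_{mk}^{-1}\mathbf{\hat{H}}_{mk}$ up to right-multiplication by a nonsingular matrix. The L-MMSE combiner \eqref{eq:L_MMSE_3} has the form $p_k(\mathbf{B}_{mk}+p_k\mathbf{\hat{H}}_{mk}\mathbf{P}_k\mathbf{\hat{H}}_{mk}^{H})^{-1}\mathbf{\hat{H}}_{mk}\mathbf{P}_k$, which by the Woodbury identity spans the same column space as $\mathbf{B}_{mk}^{-1}\mathbf{\hat{H}}_{mk}$ and therefore attains the maximum; taking expectations delivers \eqref{eq:SE_mk_max}.

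The main obstacle is conceptual rather than computational: one must justify that $\max_m$ legitimately sits outside $\mathbb{E}\{\cdot\}$. Because each $\mathrm{SE}_{mk}^{(1)}$ is already an ergodic average, the index of the best AP depends only on the correlation statistics $\{\mathbf{R}_{mk}\}$ and on the chosen combining rule; it can therefore be decided once at system design and remains valid across all coherence blocks, matching the small-cell convention adopted in \cite{[162]}. The remaining algebra---verifying that the noise-plus-interference covariance equals $\mathbf{\Sigma}_{mk,(1)}$ and that the Woodbury identification of column spaces goes through---is routine once the independence between $\mathbf{\hat{H}}_{ml}$ and $\mathbf{\tilde{H}}_{ml}$ and the definition of $\mathbf{C}_{ml}'$ from Corollary~\ref{SE_4} are invoked.
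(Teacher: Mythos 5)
Your proposal is correct and follows essentially the same route as the paper, whose own proof simply states that Level 1 is Corollary~\ref{SE_4} specialized to a single AP and that the L-MMSE optimality follows as in Theorem~\ref{Theorem1}; you have merely spelled out that specialization (including the Woodbury/column-space identification that the paper's Appendix~\ref{appendix_Proof_Th_1} phrases as equivalence ``up to a scaling matrix'') in full detail.
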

\begin{proof}
We can view this scenario as Corollary~\ref{SE_4} with one particular AP decoding the signal. So we can compute the achievable SE and obtain the maximum value in the same way as Theorem~\ref{Theorem1}.
\end{proof}

\begin{remark}
We notice that any combining scheme like MR combining or MMSE (L-MMSE) combining can be adopted in Level 4 to Level 1 and all the achievable SE expressions can be computed using Monte-Carlo simulations.
As for Level 3 and Level 2, we can derive the closed-form SE expressions with MR combining as \eqref{eq:SE_3_closed_form} and \eqref{eq:Gama_2_Closed-form}. However, as for the L-MMSE combining, we can not derive a closed-form SE expression due to the inverse matrix that contains random matrices.
\end{remark}
\begin{remark}
All results derived in this section are applicable for any Rayleigh fading channel model $\mathbf{h}_{mk}\sim \mathcal{N} _{\mathbb{C}}\left( \mathbf{0},\mathbf{R}_{mk} \right) $ with arbitrary forms of $\mathbf{R}_{mk}$, such as some special cases described in Section~\ref{Special_channel_cases}, so undoubtedly provide many important guidance for the performance analysis of the CF mMIMO system with multi-antenna UEs.
\end{remark}
\begin{remark}
Note that all expressions based on multi-antenna UEs are not simple extensions from the results of single-antenna UEs in \cite{[162]}. The presented analysis is implemented in the matrix domain, and we detect the data vector $\mathbf{\check{x}}_k$ or $\mathbf{\tilde{x}}_{mk}$ instead of detecting the data symbol at each antenna separately as \cite{113,8811486}\footnote{Under the setting in \cite{113,8811486}, for the fully centralized processing, the data symbol transmitted by the $n$-th antenna of UE $k$ is detected separately as $\check{x}_{k,n}=\mathbf{v}_{k,n}^{H}\mathbf{y}$, where $\check{x}_{k,n}$ is the detect of the data transmitted by the $n$-th antenna of UE $k$ and $\mathbf{v}_{k,n}\in \mathbb{C}^{LM}$ is an arbitrary receive combining vector for the $n$-th antenna of UE $k$. Under this setting, we treat each antenna at a UE as a separate ``virtual" UE when it comes to the data transmission and can follow the similar methods in \cite{[162]} to derive the achievable SE expressions.} and we derive the achievable SE expressions based on the mutual information theory in \cite{hjorungnes2011complex}. More importantly, the feasibilities of conclusions in \cite{[162]} based on single-antenna UEs are validated in this paper with the setting of multi-antenna UEs, such as the optimality of MMSE combining for maximizing the $\mathrm{SE}_{k}^{\left( 4 \right)}$ and the optimality of the LSFD scheme for maximizing the $\mathrm{SE}_{k}^{\left( 3 \right)}$.
\end{remark}

\section{Numerical Results}\label{se:Numerical Results}
We consider APs and UEs are uniformly distributed in a $1\times1\,\text{km}^2$ area with a wrap-around scheme \cite{8187178}. The pathloss is computed by the COST 321 Walfish-Ikegami model as
$
\beta _{mk}\left[ \mathrm{dB} \right] =-34.53-38\log _{10}\left( d_{mk}/{1\,\mathrm{m}} \right) +F_{mk},
$
where $d_{mk}$ is the distance between AP $m$ and UE $k$ (taking an $11\,\text{m}$ height difference into account). We model the shadow fading $F_{mk}$ as in \cite{7827017} with $F_{mk}=\sqrt{\delta _f}a_m+\sqrt{1-\delta _f}b_k$, where $a_m\sim\mathcal{N}(0,\delta _{\text{sf}}^{2} )$ and $b_k\sim \mathcal{N}(0,\delta _{\text{sf}}^{2} )$ are independent random variables and $\delta _f$ is the shadow fading parameter. The covariance functions of $a_m$ and $b_k$ are $\mathbb{E}\{ a_ma_{m'} \} =2^{-\frac{d_{mm'}}{d_{\text{dc}}}}$, $\mathbb{E}\{ b_kb_{k'} \} =2^{-\frac{d_{kk'}}{d_{\text{dc}}}}$ where $d_{mm'}$ and $d_{kk'}$ are the geographical distances between AP $m$-AP $m'$ and UE $k$-UE $k'$, respectively, $d_{\text{dc}}$ is the decorrelation distance depending on the environment. Let $\delta _f=0.5$, $d_{\text{dc}}=100\,\text{m}$ and $\delta _{\text{sf}}=8$ \cite{8809413}.

In practice, $\mathbf{U}_{mk,\mathrm{r}}$ and $\mathbf{U}_{mk,\mathrm{t}}$ are estimated through measurements \cite{1576533}. But for numerical simulations in this paper, we generate $\mathbf{U}_{mk,\mathrm{r}}$ and $\mathbf{U}_{mk,\mathrm{t}}$ randomly.  As for the coupling matrix $\mathbf{W}_{mk}$, we have \cite{9148706}
\begin{equation}\label{Coupling_Matrix}
\mathbf{W}_{mk}=\beta _{mk}\left[ \begin{matrix}
	\frac{LN}{2}&		a\mathbf{1}_{1\times \left( N-1 \right)}\\
	a\mathbf{1}_{\left( L-1 \right) \times 1}&		a\mathbf{1}_{\left( L-1 \right) \times \left( N-1 \right)}\\
\end{matrix} \right]
\end{equation}
for the jointly-correlated Rayleigh fading channel with $a\triangleq \frac{LN}{2\left( LN-1 \right)}$. We investigate communication with $20\,\text{MHz}$ bandwidth and $\sigma ^2=-94\,\text{dBm}$ noise power. All the UEs transmit with the power $200\,\text{mW}$ and the power is assumed to be divided equally between the $N$ antennas of each UE, that is $\mathbf{\Omega}_k=\mathbf{P}_k=\frac{1}{N}\mathbf{I}_N$. Besides, each coherence block contains $\tau _c=200$ channel uses and $\tau _p=KN$, unless mentioned. Moreover, an effective pilot assignment approach based on the full correlation matrix $\mathbf{R}_{mk}$ is implemented, where first $\lfloor \tau _p/N \rfloor $ UEs are assigned with each pilot matrix randomly and other UEs are assigned with their pilot matrices that achieve the least interference to UEs in the current pilot set.

\begin{remark}
The coupling matrix in \eqref{Coupling_Matrix} has one entry scaled with $\mathcal{O} \left( LN \right)$, which means that there are dominant transmit-receive eigenpairs capturing half of the channel power. The remaining channel power is assumed to be divided equally to the other entries in the coupling matrix.
\end{remark}

\subsection{Effects of the Number of Antennas Per UE}\label{Number_UE_antennas}
We firstly investigate the effects of the number of antennas per UE. Figure \ref{CDF_four_levels} shows the complementary cumulative distribution function (CCDF) of the per-user SE for four processing schemes investigated in this paper over MMSE (L-MMSE) combining and MR combining with $M=40, K=20$ and $L=4$. For MMSE or L-MMSE combining shown in Fig. \ref{CDF_four_levels} (a), we observe that Level 4 undoubtedly outperforms other schemes since it can suppress the interference by all the collective channel estimates through MMSE combining and MMSE-SIC detectors. For MR combing, compared to Fig. \ref{CDF_four_levels} (a), all processing schemes except Level 1 suffer from a large SE loss, since MR combining can not suppress the interference effectively. Besides, for Level 3 and Level 2, markers ``$\circ$" generated by analytical results in \eqref{eq:SE_3_closed_form} and \eqref{eq:Gama_2_Closed-form} overlap with the curves generated by simulations, respectively, validating the accuracy of our derived closed-form SE expressions. Moreover, the performance gap between Level 2 and Level 3 for L-MMSE combining is smaller than that of MR combining, which indicates the combination of L-MMSE combining and MMSE-SIC detector is effective for Level 2 to achieve the approaching SE performance to Level 3.

\begin{figure}[t]\centering
\vspace{0.3cm}
\subfigure[ MMSE (L-MMSE) combining]{
\begin{minipage}{8cm}\centering
\includegraphics[scale=0.6]{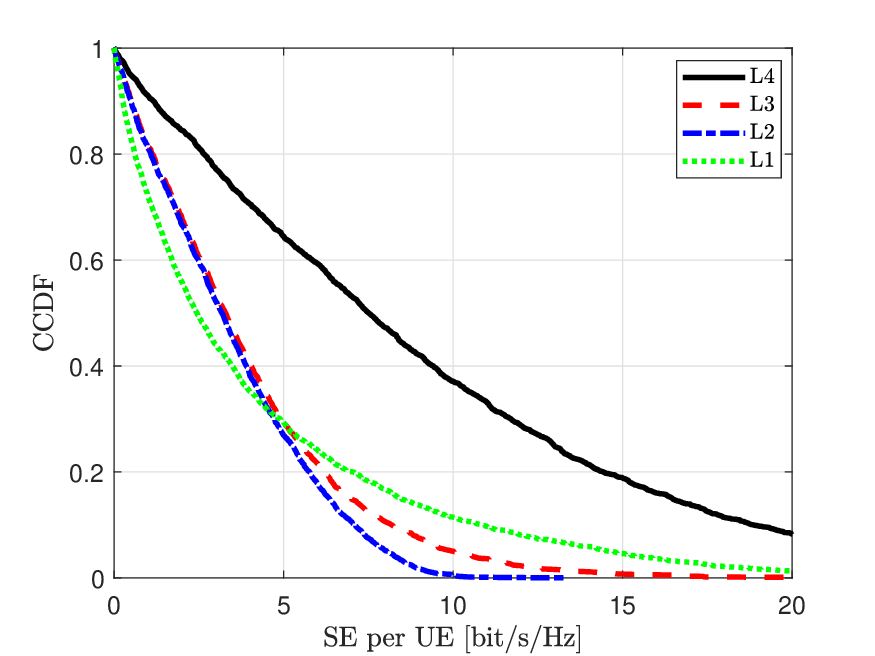}
\end{minipage}}
\subfigure[ MR combining]{
\begin{minipage}{8cm}\centering
\includegraphics[scale=0.6]{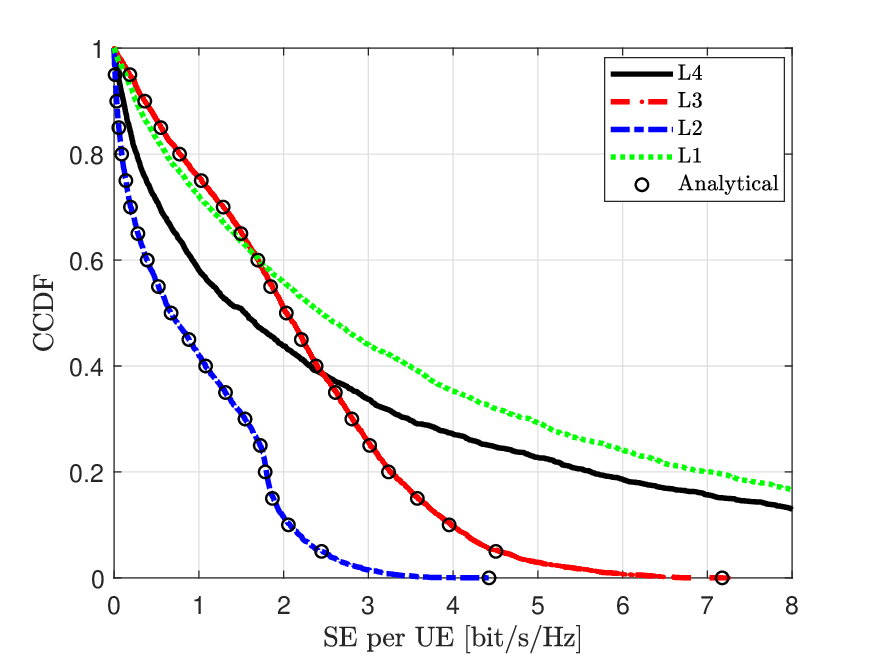}
\end{minipage}}
\caption{CCDF of SE per UE for MMSE (L-MMSE) combining and MR combining with $M=40$, $K=20$ and $L=4$.
\label{CDF_four_levels}}
\end{figure}

Figure \ref{80_likely_SE} shows the $80\%$-likely per-user SE as a function of the number of antennas per UE $N$ for four processing schemes over MMSE (L-MMSE) combining and MR combining with $M=40, K=20$ and $L=4$. From the view of $80\%$ likely SE points, when using MMSE or L-MMSE combining, we observe that Level 4 outperforms other schemes, while Level 1 gives the lowest SEs. Besides, the performance gap between Level 3 and Level 2 using L-MMSE combining is smaller than that of MR combining. When using MR combining, Level 3 achieves the highest SEs, while Level 4 gives a poor SE performance. Moreover, we notice that the SEs reach the maximum values with particular ``$N^*$" noted in the figures, then decrease with the increase of $N$. This phenomenon indicates that the increase of the number of antennas per UE may give rise to the SE degradation.
The reason for this performance degradation is that increasing $N$ will reduce the pre-log factor $\left( \tau _c-\tau _p \right)/ \tau_c$ in SE expressions and the decrease incurred by the pre-log factor outweighs the gain in having more antennas at the UE-side. Besides, without any uplink precoding scheme and power control method, UEs cannot make full use of having additional antennas to achieve better SE performance. So it is vital to investigate the uplink precoding scheme and power control method for CF mMIMO with multi-antenna UEs in the future work.

\begin{figure}[t]\centering
\vspace{0.3cm}
\subfigure[ MMSE (L-MMSE) combining]{
\begin{minipage}{8cm}\centering
\includegraphics[scale=0.6]{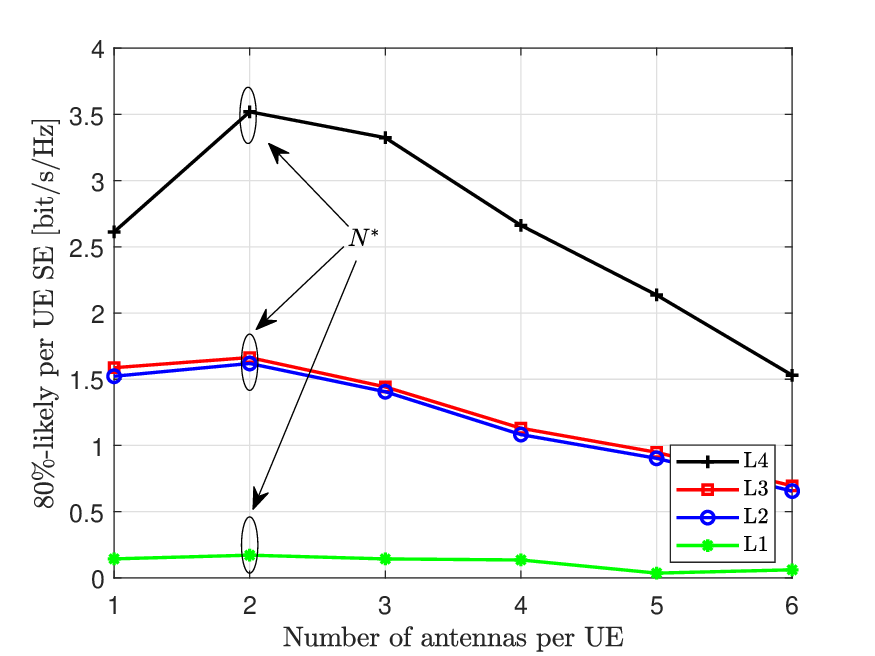}
\end{minipage}}
\subfigure[ MR combining]{
\begin{minipage}{8cm}\centering
\includegraphics[scale=0.6]{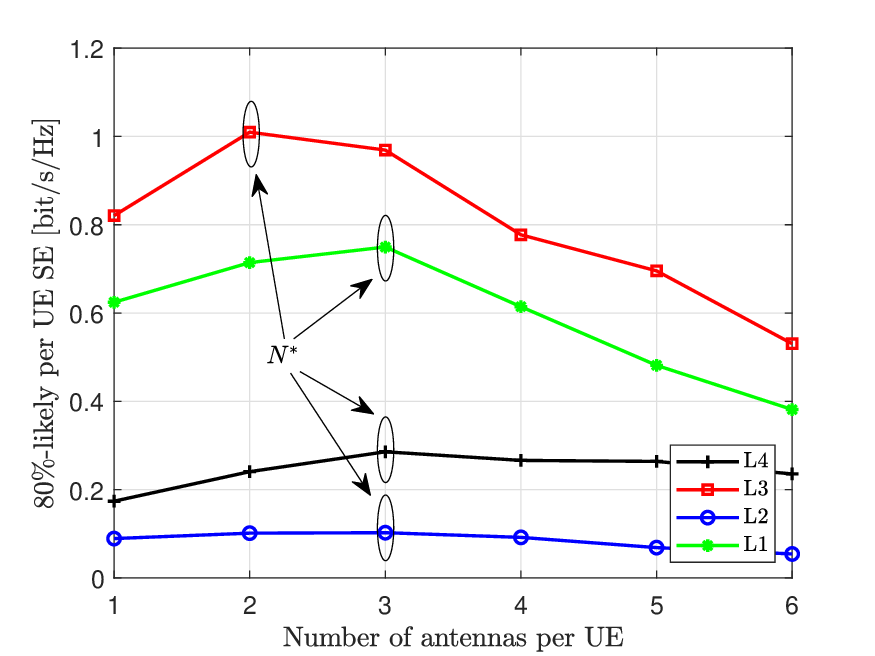}
\end{minipage}}
\caption{$80\%$-likely per-user SE for MMSE (L-MMSE) combining and MR combining as a function of $N$ with $M=40$, $K=20$ and $L=4$.
\label{80_likely_SE}}
\end{figure}

\subsection{Effects of the Length of the Resource Block} \label{Resource_block}
In this subsection, we investigate the effects of the length of the resource block, such as the length of the coherence block $\tau _c$ and the length of duration per coherence block for the pilot transmission $\tau _p$. Figure \ref{Avr_SE_L4_tau_c} considers the average SE as a function of the number of antennas per UE $N$ for Level 4 with MMSE combining over different $\tau _c$. We notice that higher $\tau _c$ undoubtedly achieves higher SE for the increase of the effective transmission ratio. Moreover, we observe that $N^*$ leading to the maximum SE increases as $\tau _c$ increases, which indicates that the average SE benefits from additional UE antennas when the length of the coherence block is long enough, otherwise, additional antennas may lead to the SE degradation.

Figure \ref{Avr_SE_L4_tau_p} shows the average SE as a function of $N$ for Level 4 with MMSE combining over different $\tau _p$. Note that the fact that $\tau _p<KN$ leads to the pilot contamination. For the scenario with $\tau _p=KN/4$, the number of elements in $\mathcal{P}_k$ is $4$, which means that every four UEs will be assigned with a similar pilot matrix. With high $N$ such as $N=6$, the SE given by the scenario with no pilot contamination ($\tau _p=KN$) is lower than that of the scenario with the pilot contamination ($\tau _p=KN/4$ or $\tau _p=KN/10$). Moreover, we can observe a trade-off between the pilot length and the SE performance: when $N$ is small, higher $\tau _p$ such as $\tau _p=KN$ can achieve better SE performance but the SE reduces with the increase of $N$ after reaching the maximum SE value with $N^*$. However, with lower $\tau _p$ such as $\tau _p=KN/10$, the effective transmission ratio $\left( \tau _c-\tau _p \right)/ \tau_c$ reduces slowly with the increase of $N$ so $N^*$ giving the maximum SE value is higher than that of high $\tau _p$. These important insights can also confirm that it is not always worth increasing $N$ since the increase of $N$ may lead to the SE performance degradation.
\begin{figure}[t]
\centering
\includegraphics[scale=0.6]{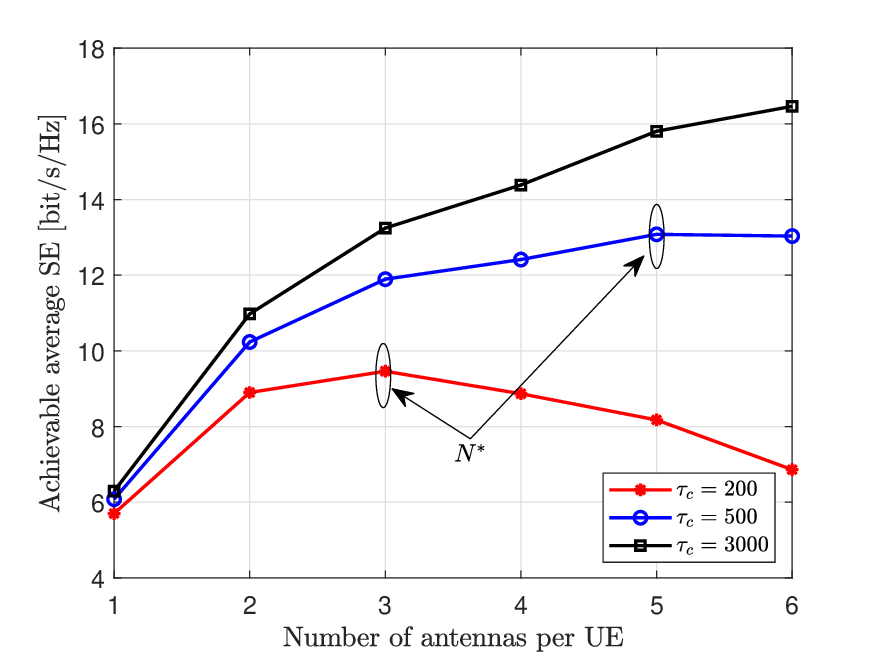}
\caption{Average SE against the number of antennas per UE for Level 4 with MMSE combining over different $\tau_c$ with $M=40$, $K=20$ and $L=4$.
\label{Avr_SE_L4_tau_c}}
\end{figure}
\begin{figure}[t]
\centering
\includegraphics[scale=0.6]{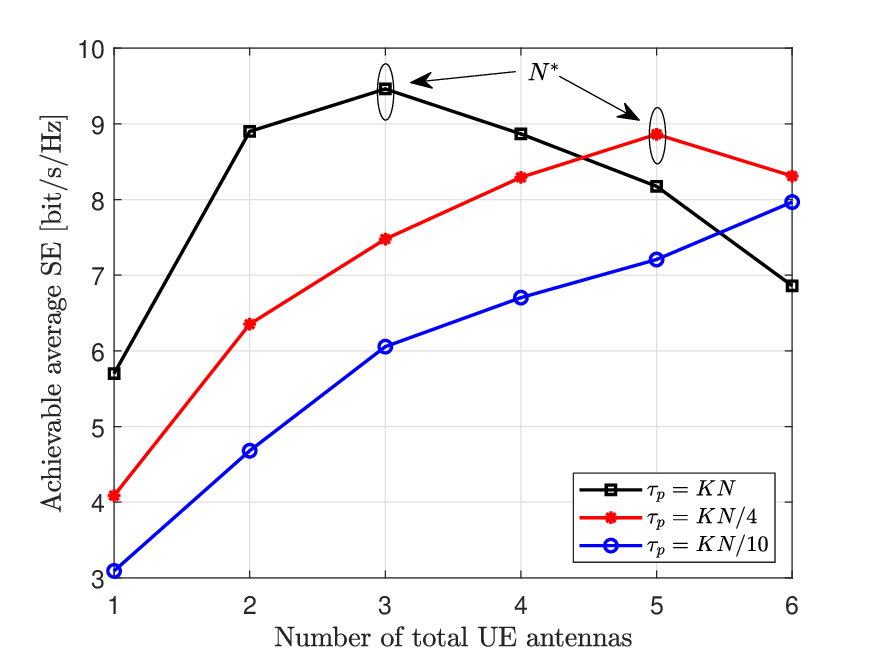}
\caption{Average SE against the number of antennas per UE for Level 4 with MMSE combining over different $\tau_p$ with $M=40$, $K=20$ and $L=4$.
\label{Avr_SE_L4_tau_p}}
\end{figure}

\subsection{Effects of the Number of UEs} \label{Impacts_UE}
Note that the number of the total data streams transmitted equals $KN$. For a particular number of data streams $KN$, $KN$ single-antenna UEs or fewer multi-antenna UEs can be scheduled. Figure \ref{SE_KN} compares the achievable sum SE as a function of the number of total UE antennas $KN$ for Level 2 with MR combining over different system implementations. Figure \ref{SE_KN} shows that for any given $KN$, scheduling $KN$ single-antenna UEs is always beneficial. Besides, the achievable sum SE reaches the maximum value with the optimal $KN$ around $60$.
\begin{figure}[t]
\centering
\includegraphics[scale=0.6]{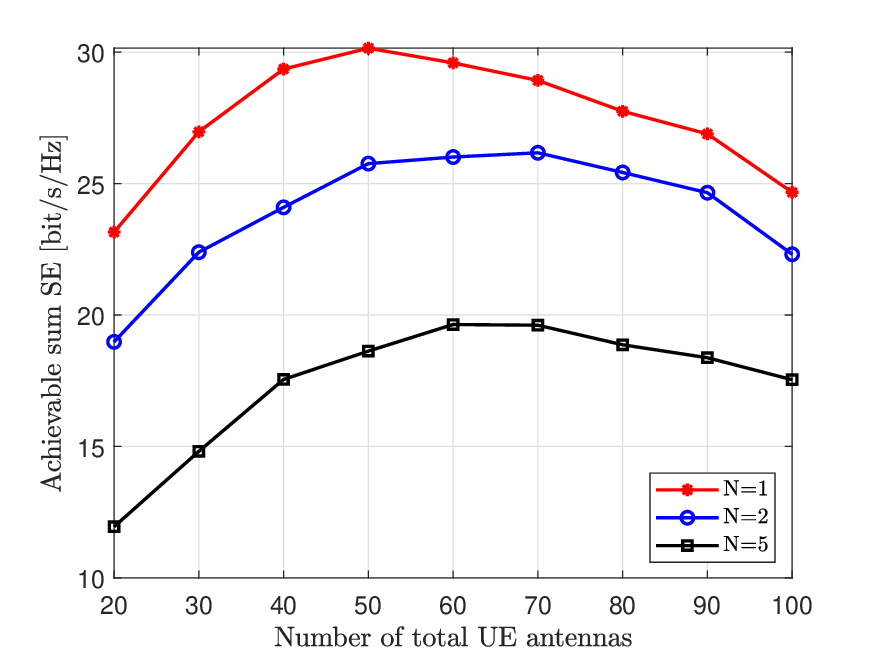}
\caption{Sum SE against the number of total UE antennas for Level 2 with MR combining over $M=40$, $K=20$ and $L=2$.
\label{SE_KN}}
\end{figure}
\begin{figure}[t]
\centering
\includegraphics[scale=0.6]{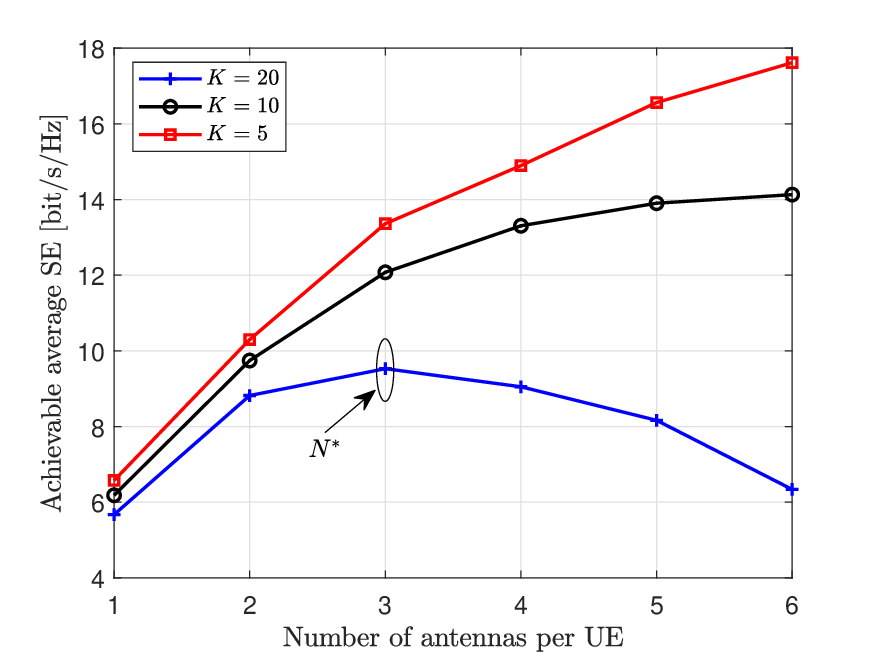}
\caption{Average SE for Level 4 over MMSE combining as a function of $N$ with $M=40$, $K=[5,10,20]$ and $L=4$.
\label{Average_SE_different_K}}
\end{figure}

Figure. \ref{Average_SE_different_K} shows the average SE as a function of the number of antennas per UE $N$ for Level 4 with MMSE combining with $M=40$, $K=[5,10,20]$, and $L=4$. We notice that, with a small or moderate number of UEs compared with the number of APs and total antennas of APs (such as $K=5$ or $K=10$), the average SE performance benefits from additional UE antennas compared with the scenario of a large number of UEs (such as $K=20$). With $K=5$ or $K=10$, compared with $N=1$, equipping with $6$ antennas per UE can achieve about $168\%$ and $129\%$ average SE improvement, respectively. So the SE can greatly benefit from equipping with multiple UE antennas to increase the spatial multiplexing and SE performance significantly in lightly or medium loaded systems with few UEs.


\subsection{Impacts of Channel Models and the Number of Antennas Per AP} \label{Impacts_Channel_Model}
Next, we discuss the impacts of different channel models. The achievable sum SE as a function of the number of APs $M$ for the LSFD scheme with MR combining over different channel models is shown in Fig. \ref{SE_channel_model}.
As observed, the achievable sum SE undoubtedly increases with the increase of $M$. The Kronecker model yields lower SE than the one of the Weichselberger model, for the reason for neglecting the joint spatial correlation feature of the channel. The uncorrelated Rayleigh fading channel achieves higher SE than the one of the Weichselberger model. Besides, we investigate the SE performance over other channel models related to the practical physical radio environments in Section \ref{Practical_Channel_Models}.
Note that $\mathbf{W}_{mk}$ with only a single row achieves the lowest SE performance since only a single eigenmode at the AP is active \cite{1576533}. Furthermore, we notice that markers ``$\times $" generated by analytical results overlap with the curves generated by simulations, respectively, which verifies the accuracy of our derived SE closed-form expressions and, more importantly, implies that our derived expressions hold for any Rayleigh fading channel models.
\begin{figure}[t]
\centering
\includegraphics[scale=0.6]{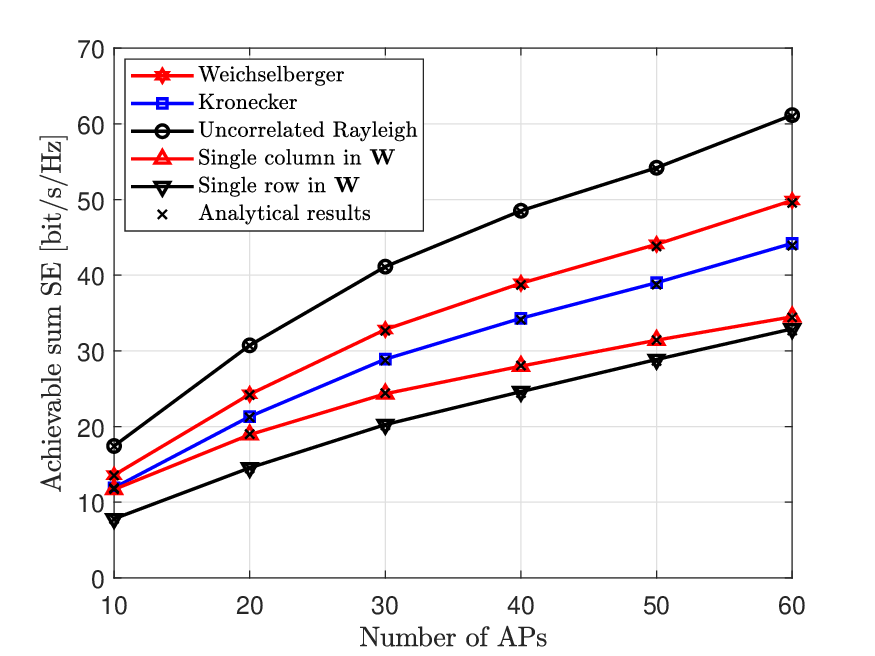}
\caption{Sum SE against the number of APs for Level 3 over MR combining and different channel models with $K=10$, $L=4$, $N=4$ and $\tau _p=KN/2$.
\label{SE_channel_model}}
\end{figure}
\begin{figure}[t]
\centering
\includegraphics[scale=0.6]{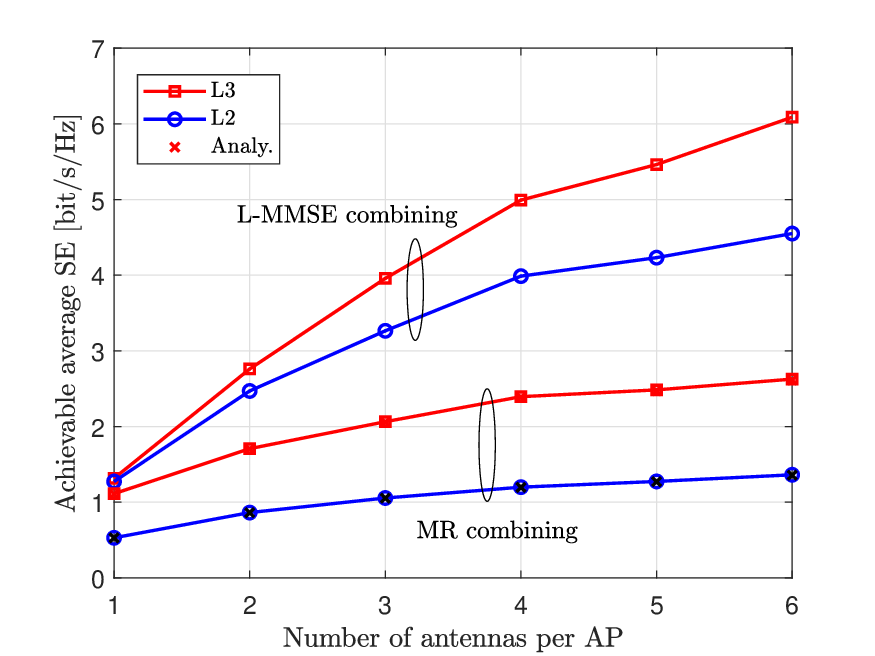}
\caption{Average SE against the number of antennas per AP for Level 3 and Level 2 over L-MMSE combining and MR combining with $M=40, K=20$, $N=2$ and $\tau _p=KN/2$.
\label{SE_L}}
\end{figure}

Figure \ref{SE_L} shows the achievable average SE as a function of the number of antennas per AP for Level 3 and Level 2 over L-MMSE combining and MR combining with $M=40$, $K=20$, $N=2$ and $\tau _p=KN/2$. We notice that the performance gap between L-MMSE combining and MR combining becomes larger with the increase of $L$. Besides, markers ``$\times $" generated by analytical results in \eqref{eq:SE_3_closed_form} and \eqref{eq:Gama_2_Closed-form} overlap with the curves generated by simulations, respectively, validating again the accuracy of our derived closed-form SE expressions with pilot contamination.


\section{Conclusions}\label{se:conclusion}
In this paper, we investigate the uplink SE performance of a CF mMIMO system with both APs and UEs equipped with multiple antennas over the jointly-correlated (or the Weichselberger model) Rayleigh fading channel. We consider four different implementations of CF mMIMO from fully centralized to fully distributed with multi-antenna UEs and derive achievable SE expressions with MMSE-SIC detectors for any combining scheme. Then based on different CSI, we design MMSE combining matrix with full CSI and local MMSE combining matrix with local CSI. Moreover, we prove the optimality for MMSE combining in Level 4, L-MMSE combining in Level 1, and the optimal LSFD coefficients matrix in Level 3 to maximize the respective achievable SE. Besides, with MR combining, we compute the novel closed-form SE expressions for Level 3 and Level 2. In numerical results, we investigate the impact of the number of antennas per UE and compare the SE performance for different processing schemes, combining schemes and channel models. It is greatly important to find that additional UE antennas may degrade the SE performance. And additional UE antennas are beneficial to the SE performance in lightly or medium loaded systems with few UEs. In the future work, we will investigate the uplink precoding scheme with multi-antenna UEs, the power control and allocation scheme for multi-antenna UEs, and scalable CF mMIMO systems with multi-antenna UEs.

\appendix

\subsection{Proof of Corollary 1}\label{appendix_proof_coro_1}
Note that \eqref{eq:Level4_Sk} can be rewritten as
$
\mathbf{\check{x}}_k=\mathbf{V}_{k}^{H}\mathbf{H}_k\mathbf{P}_{k}\mathbf{x}_k+\sum_{l\ne k}^K{\mathbf{V}_{k}^{H}\mathbf{H}_l\mathbf{P}_{l}\mathbf{x}_l}+\mathbf{V}_{k}^{H}\mathbf{n}.
$
So according to the definition of mutual information, we have \cite{1624653}
\begin{equation}\label{eq:mutual_information}
I( \mathbf{x}_k;\mathbf{\check{x}}_k,\mathbf{\hat{H}}_k ) =h(  \mathbf{x}_k |\mathbf{\hat{H}}_k ) -h( \mathbf{x}_k |\mathbf{\check{x}}_k,\mathbf{\hat{H}}_k ),
\end{equation}
where $h\left( \cdot \right) $ denotes the differential entropy. Choosing the potentially suboptimal $\mathbf{x}_k\sim \mathcal{N}_{\mathbb{C}}\left( 0,\mathbf{I}_N \right)$ yields
\begin{equation}\label{eq:mutual_information_1}
h\left( \mathbf{x}_k \right) =\log _2\left| \pi e\mathbf{I}_N \right|.
\end{equation}
Then, the MMSE estimate of $\mathbf{x}_k$ given $\mathbf{\check{x}}_k$ and $\mathbf{\hat{H}}_k$ is
\begin{equation}
\begin{aligned}
\mathbf{\bar{x}}_k&=\mathbb{E}\{  \mathbf{V}_{k}^{H}\mathbf{H}_k\mathbf{P}_{k} |\mathbf{\hat{H}}_k \} \mathbb{E}\{ \mathbf{\check{x}}_k\mathbf{\check{x}}_{k}^{H} |\mathbf{\hat{H}}_k \} ^{-1}\mathbf{\check{x}}_k\\
&=\mathbf{D}_{k,\left( 4 \right)}\mathbf{\bar{\Sigma}}_{k,\left( 4 \right)}^{-1}\mathbf{\check{x}}_k,
\end{aligned}
\end{equation}
where $\mathbf{D}_{k,\left( 4 \right)}=\mathbb{E}\{ \mathbf{V}_{k}^{H}\mathbf{H}_k\mathbf{P}_{k} |\mathbf{\hat{H}}_k \} =\mathbf{V}_{k}^{H}\mathbf{\hat{H}}_k\mathbf{P}_{k}$,
$
\mathbf{\bar{\Sigma}}_{k,\left( 4 \right)}=\mathbb{E}\{ \mathbf{\check{x}}_k\mathbf{\check{x}}_{k}^{H} |\mathbf{\hat{H}}_k \}=\mathbf{V}_{k}^{H}( \sum_{l=1}^K{\mathbf{\hat{H}}_l\mathbf{\bar{P}}_l\mathbf{\hat{H}}_{l}^{H}}+\sum_{l=1}^K{\mathbf{C}_{l}^{\prime}}+\sigma ^2\mathbf{I}_{ML} ) \mathbf{V}_k,
$
and $\mathbf{C}_{l}^{\prime}=\mathbb{E}\{ \mathbf{\tilde{H}}_l\mathbf{\bar{P}}_l\mathbf{\tilde{H}}_{l}^{H} \} =\mathrm{diag}( \mathbf{C}_{1l}^{\prime},\cdots ,\mathbf{C}_{Ml}^{\prime})$
with $\left( j,q \right) $-th element of $\mathbf{C}_{ml}^{\prime}=\mathbb{E}\{ \mathbf{\tilde{H}}_{ml}\mathbf{\bar{P}}_l\mathbf{\tilde{H}}_{ml}^{H} \}$ being
$
[ \mathbf{C}_{ml}^{\prime}] _{jq}=\mathbb{E}\{ \mathbf{\tilde{H}}_{ml}\mathbf{\bar{P}}_l\mathbf{\tilde{H}}_{ml}^{H}\} =\sum_{p_1=1}^N{\sum_{p_2=1}^N{\left[ \mathbf{\bar{P}}_l \right] _{p_2p_1}[ \mathbf{C}_{ml}^{p_2p_1}] _{jq}}}.
$
Moreover, let $\mathbf{\tilde{x}}_k=\mathbf{x}_k-\mathbf{\hat{x}}_k$ denote the estimation error of $\mathbf{x}_k$, then $h(  \mathbf{x}_k |\mathbf{\check{x}}_k,\mathbf{\hat{H}}_k )$ is upper bounded by
\begin{equation}\label{eq:mutual_information_2}
\begin{aligned}
h(  \mathbf{x}_k |\mathbf{\check{x}}_k,\mathbf{\hat{H}}_k) &\leqslant \mathbb{E}\{ \log _2| \pi e\mathbb{E}\{ \mathbf{\tilde{x}}_k\mathbf{\tilde{x}}_{k}^{H} |\mathbf{\hat{H}}_k \} | \}\\
&=\mathbb{E}\{ \log _2| \pi e( \mathbf{I}_N-\mathbf{D}_{k,\left( 4 \right)}\mathbf{\bar{\Sigma}}_{k,\left( 4 \right)}^{-1}\mathbf{D}_{k,\left( 4 \right)}^{H} ) | \}.
\end{aligned}
\end{equation}
Plugging \eqref{eq:mutual_information_1} and \eqref{eq:mutual_information_2} into \eqref{eq:mutual_information} and using the matrix inversion lemma, we have
\begin{equation}
I( \mathbf{x}_k;\mathbf{\check{x}}_k,\mathbf{\hat{H}}_k ) \geqslant \mathbb{E}\{ \log _2 | \mathbf{I}_N+\mathbf{D}_{k,\left( 4 \right)}^{H}\mathbf{\Sigma }_{k,\left( 4 \right)}^{-1}\mathbf{D}_{k,\left( 4 \right)} | \} ,
\end{equation}
where $\mathbf{\Sigma }_{k,\left( 4 \right)}=\mathbf{V}_{k}^{H}( \sum_{l=1}^K{\mathbf{\hat{H}}_l\mathbf{\bar{P}}_l\mathbf{\hat{H}}_{l}^{H}}+\sum_{l=1}^K{\mathbf{C}_{l}^{\prime}}+\sigma ^2\mathbf{I}_{ML}) \mathbf{V}_k-\mathbf{D}_{k,\left( 4 \right)}\mathbf{D}_{k,\left( 4 \right)}^{H}$. So we can derive an achievable SE of UE $k$ as \eqref{eq:SE_4}.

\subsection{Proof of Theorem 1}\label{appendix_Proof_Th_1}
Following from \cite{tse2005fundamentals}, we can rewrite the received signal in \eqref{eq:Level4_y} as
\begin{equation}
\begin{aligned}
\mathbf{y}=\mathbf{\hat{H}}_k\mathbf{P}_k\mathbf{x}_k+\underset{\mathbf{v}}{\underbrace{\mathbf{\tilde{H}}_k\mathbf{P}_k\mathbf{x}_k+\sum_{l\ne k}^K{\mathbf{H}_l\mathbf{P}_l\mathbf{x}_l}+\mathbf{n}}},
\end{aligned}
\end{equation}
where $\mathbf{v}\triangleq \mathbf{\tilde{H}}_k\mathbf{P}_{k}\mathbf{x}_k+\sum_{l\ne k}^K{\mathbf{H}_l\mathbf{P}_{l}\mathbf{x}_l}+\mathbf{n}$ is a complex circular symmetric colored noise with an invertible covariance matrix
\begin{equation}
\begin{aligned}
\mathbf{\Xi }_k&=\mathbb{E}\{  \mathbf{vv}^H |\mathbf{\hat{H}}_k\}\\ &=\sum_{l=1}^K{\mathbf{\hat{H}}_l\mathbf{\bar{P}}_l\mathbf{\hat{H}}_{l}^{H}}-\mathbf{\hat{H}}_k\mathbf{\bar{P}}_k\mathbf{\hat{H}}_{k}^{H}+\sum_{l=1}^K{\mathbf{C}_{l}^{\prime}}+\sigma ^2\mathbf{I}_{ML}.
\end{aligned}
\end{equation}


We firstly whiten the noise as
\setcounter{equation}{42}
\begin{equation}\label{eq:White_Signal}
\mathbf{\Xi }_{k}^{-\frac{1}{2}}\mathbf{y}=\mathbf{\Xi }_{k}^{-\frac{1}{2}}\mathbf{\hat{H}}_k\mathbf{P}_{k}\mathbf{x}_k+\mathbf{\tilde{v}},
\end{equation}
where $\mathbf{\tilde{v}}\triangleq \mathbf{\Xi }_{k}^{-\frac{1}{2}}\mathbf{v}$ becomes white.

Next, we project \eqref{eq:White_Signal} in the direction of $\mathbf{\Xi }_{k}^{-\frac{1}{2}}\mathbf{\hat{H}}_k\mathbf{P}_{k}$ to get an effective scalar channel as
\begin{equation}
\begin{aligned}
( \mathbf{\Xi }_{k}^{-\frac{1}{2}}\mathbf{\hat{H}}_k\mathbf{P}_k) ^H\mathbf{\Xi }_{k}^{-\frac{1}{2}}\mathbf{y}=( \mathbf{\hat{H}}_k\mathbf{P}_k ) ^H\mathbf{\Xi }_{k}^{-1}\cdot ( \mathbf{\hat{H}}_k\mathbf{P}_k\mathbf{x}_k+\mathbf{v}) .
\end{aligned}
\end{equation}

Finally, the optimal receive combining matrix can be represented as
\begin{equation}\label{eq:Optimal_Combining}
\begin{aligned}
\mathbf{V}_k&=\mathbf{\Xi }_{k}^{-1}\mathbf{\hat{H}}_k\mathbf{P}_{k}\\
&=\left(\sum_{l=1}^K{\mathbf{\hat{H}}_l\mathbf{\bar{P}}_l\mathbf{\hat{H}}_{l}^{H}}\!-\!\mathbf{\hat{H}}_k\mathbf{\bar{P}}_k\mathbf{\hat{H}}_{k}^{H}\!+\!\sum_{l=1}^K{\mathbf{C}_{l}^{\prime}}\!+\!\sigma ^2\mathbf{I}_{ML} \right) ^{-1}\!\!\mathbf{\hat{H}}_k\mathbf{P}_{k}.
\end{aligned}
\end{equation}
According to the matrix inversion lemma, we can proof that the combining matrix in \eqref{eq:Optimal_Combining} is equivalent to the MMSE combining matrix in \eqref{eq:MMSE_Com_4} except from having another scaling matrix \cite{8187178}. Note that the SE in \eqref{eq:SE_4} does change if we scale $\mathbf{V}_k$ by any non-zero scalar, so the so-called MMSE combining matrix minimizing the MSE as
\begin{equation}
\begin{aligned}
\mathbf{V}_k=\left( \sum_{l=1}^K{\left( \mathbf{\hat{H}}_l\mathbf{\bar{P}}_l\mathbf{\hat{H}}_{l}^{H}+\mathbf{C}_{l}^{\prime} \right)}+\sigma ^2\mathbf{I}_{ML} \right) ^{-1}\mathbf{\hat{H}}_k\mathbf{P}_k
\end{aligned}
\end{equation}
can also lead to the maximum SE value as
\begin{align}\label{SE_4_Max}
\mathrm{SE}_{k}^{\left( 4 \right)}=\left( 1-\frac{\tau _p}{\tau _c} \right) \mathbb{E}\left\{ \log _2\left| \mathbf{I}_N+\mathbf{P}_k^{H}\mathbf{\hat{H}}_{k}^{H}\left( \sum_{l=1,l\ne k}^K{\mathbf{\hat{H}}_l\mathbf{\bar{P}}_l\mathbf{\hat{H}}_{l}^{H}}+\sum_{l=1}^K{\mathbf{C}_{l}^{\prime}}+\sigma ^2\mathbf{I}_{ML} \right) ^{-1}\mathbf{\hat{H}}_k\mathbf{P}_k \right| \right\}.
\end{align}
\subsection{Proof of the minimization of $\mathrm{MSE}_{k}^{\mathrm{LSFD}}$ by \eqref{LSFD_op}}\label{appendix_minimization_LSFD}
Note that $\mathrm{MSE}_{k}^{\mathrm{LSFD}}=\mathbb{E} \{ \| \mathbf{x}_k-\mathbf{\hat{x}}_k \| ^2| \mathbf{\Theta } \}
=\mathrm{tr}( \mathbb{E} \left\{ ( \mathbf{x}_k-\mathbf{\hat{x}}_k ) ( \mathbf{x}_k-\mathbf{\hat{x}}_k ) ^H| \mathbf{\Theta } \right. \} )$ can be denoted as
\begin{equation}\label{MSE_3}
\begin{aligned}
\mathrm{MSE}_{k}^{\mathrm{LSFD}}&=\mathrm{tr}\left( \mathbb{E} \left\{ \mathbf{x}_k\mathbf{x}_{k}^{H}\left| \mathbf{\Theta } \right. \right\} -\mathbb{E} \left\{ \mathbf{x}_k\mathbf{\hat{x}}_{k}^{H}\left| \mathbf{\Theta } \right. \right\} -\mathbb{E} \left\{ \mathbf{\hat{x}}_k\mathbf{x}_{k}^{H}\left| \mathbf{\Theta } \right. \right\} +\mathbb{E} \left\{ \mathbf{\hat{x}}_k\mathbf{\hat{x}}_{k}^{H}\left| \mathbf{\Theta } \right. \right\} \right)\\
&=\mathrm{tr}\left( \mathbf{I}_N-\mathbf{P}_{k}^{H}\mathbb{E} \left\{ \mathbf{G}_{kk}^{H} \right\} \mathbf{A}_k-\mathbf{A}_{k}^{H}\mathbb{E} \left\{ \mathbf{G}_{kk} \right\} \mathbf{P}_k+\sum_{l=1}^K{\mathbf{A}_{k}^{H}\mathbb{E} \left\{ \mathbf{G}_{kl}\mathbf{\bar{P}}_l\mathbf{G}_{kl} \right\} \mathbf{A}_k} \right).
\end{aligned}
\end{equation}
By computing the partial derivation of $\mathrm{MSE}_{k}^{\mathrm{LSFD}}$ with respect to $\mathbf{A}_{k}^{H}$, we observe that $\mathbf{A}_k$ that minimizes $\mathrm{MSE}_{k}^{\mathrm{LSFD}}$ has the similar form of \eqref{LSFD_op}.

\subsection{Proof of Theorem 2}\label{appendix_Proof_Th_2}
In this part, we compute the closed-form SE expression for Level 3 based on MR combining $\mathbf{V}_{mk}=\mathbf{\hat{H}}_{mk}$ and MMSE-SIC detectors.

We begin with the first term $\mathbf{D}_{k,\left( 3 \right)}=\mathbf{A}_{k}^{H}\mathbb{E}\{ \mathbf{G}_{kk} \} \mathbf{P}_{k}$ with $\mathbb{E}\{ \mathbf{G}_{kk} \} =[ \mathbb{E}\{ \mathbf{V}_{1k}^{H}\mathbf{H}_{1k} \} ;\cdots ;\mathbb{E}\{ \mathbf{V}_{Mk}^{H}\mathbf{H}_{Mk} \} ] =[ \mathbf{Z}_{1k};\cdots ;\mathbf{Z}_{Mk} ]$, where $\mathbf{Z}_{mk}=\mathbb{E}\{ \mathbf{V}_{mk}^{H}\mathbf{H}_{mk} \} =\mathbb{E}\{ \mathbf{\hat{H}}_{mk}^{H}\mathbf{\hat{H}}_{mk} \} \in \mathbb{C}^{N\times N}$ and the $\left( n,n^{\prime} \right) $-th element of $\mathbf{Z}_{mk}$ can be denoted as
$
[ \mathbf{Z}_{mk} ] _{nn^{\prime}}=\mathbb{E}\{ \mathbf{\hat{h}}_{mk,n}^{H}\mathbf{\hat{h}}_{mk,n^{\prime}}\} =\mathrm{tr}( \mathbf{\hat{R}}_{mk}^{n^{\prime}n} ).
$

Then, we compute the second term $\mathbf{S}_k\in \mathbb{C}^{MN\times MN} $ as
\setcounter{equation}{48}
\begin{equation}
\begin{aligned}
\mathbf{S}_k &=\mathrm{diag}( \mathbb{E}\{ \mathbf{V}_{1k}^{H}\mathbf{V}_{1k} \} ,\cdots ,\mathbb{E}\{ \mathbf{V}_{Mk}^{H}\mathbf{V}_{Mk}\} )\\
&=\mathrm{diag}( \mathbf{Z}_{1k},\cdots ,\mathbf{Z}_{Mk} ).
\end{aligned}
\end{equation}

The last term in the denominator $\mathbb{E}\{ \mathbf{G}_{kl}\mathbf{\bar{P}}_l\mathbf{G}_{kl}^{H} \} \in \mathbb{C}^{MN\times MN}$ can be written as \begin{equation}\label{GPG}
\begin{aligned}
\mathbb{E} \left\{ \mathbf{G}_{kl}\mathbf{\bar{P}}_l\mathbf{G}_{kl}^{H} \right\} =\left[ \begin{matrix}
	\mathbb{E} \left\{ \mathbf{V}_{1k}^{H}\mathbf{H}_{1l}\mathbf{\bar{P}}_l\mathbf{H}_{1l}^{H}\mathbf{V}_{1k} \right\}&		\cdots&		\mathbb{E} \left\{ \mathbf{V}_{1k}^{H}\mathbf{H}_{1l}\mathbf{\bar{P}}_l\mathbf{H}_{Ml}^{H}\mathbf{V}_{Mk} \right\}\\
	\vdots&		\ddots&		\vdots\\
	\mathbb{E} \left\{ \mathbf{V}_{Mk}^{H}\mathbf{H}_{Ml}\mathbf{\bar{P}}_l\mathbf{H}_{1l}^{H}\mathbf{V}_{1k} \right\}&		\cdots&		\mathbb{E} \left\{ \mathbf{V}_{Mk}^{H}\mathbf{H}_{Ml}\mathbf{\bar{P}}_l\mathbf{H}_{Ml}^{H}\mathbf{V}_{Mk} \right\}\\
\end{matrix} \right],
\end{aligned}
\end{equation}
and the $( m,m^{\prime})$-submatrix of $\mathbb{E}\{ \mathbf{G}_{kl}\mathbf{\bar{P}}_l\mathbf{G}_{kl}^{H} \}$ is $\mathbb{E}\{ \mathbf{V}_{mk}^{H}\mathbf{H}_{ml}\mathbf{\bar{P}}_l\mathbf{H}_{m^{\prime}l}^{H}\mathbf{V}_{m^{\prime}k} \} $. Then we can compute \\$\mathbb{E}\{ \mathbf{V}_{mk}^{H}\mathbf{H}_{ml}\mathbf{\bar{P}}_l\mathbf{H}_{m^{\prime}l}^{H}\mathbf{V}_{m^{\prime}k} \} $ for all possible AP and UE combinations.

\textbf{Case 1:} $m\ne m^{\prime},l\notin \mathcal{P}_k$

We have $\mathbb{E}\{ \mathbf{V}_{mk}^{H}\mathbf{H}_{ml}\mathbf{\bar{P}}_l\mathbf{H}_{m^{\prime}l}^{H}\mathbf{V}_{m^{\prime}k} \} =0$ since $\mathbf{V}_{mk}$ and $\mathbf{H}_{ml}$ are independent and both have zero mean.

\textbf{Case 2:} $m\ne m^{\prime},l\in \mathcal{P}_k$

We can obtain
$\mathbb{E}\{ \mathbf{V}_{mk}^{H}\mathbf{H}_{ml}\mathbf{\bar{P}}_l\mathbf{H}_{m^{\prime}l}^{H}\mathbf{V}_{m^{\prime}k} \} =\mathbb{E}\{ \mathbf{V}_{mk}^{H}\mathbf{H}_{ml} \} \mathbf{\bar{P}}_l\mathbb{E}\{ \mathbf{H}_{m^{\prime}l}^{H}\mathbf{V}_{m^{\prime}k}\} =\mathbf{\Lambda }_{mkl}\mathbf{\bar{P}}_l\mathbf{\Lambda }_{m^{\prime}lk}$ for the independence of $\mathbb{E}\{ \mathbf{V}_{mk}^{H}\mathbf{H}_{ml} \}$ and $\mathbb{E}\{ \mathbf{H}_{m^{\prime}l}^{H}\mathbf{V}_{m^{\prime}k} \}$. We notice that
\setcounter{equation}{50}
\begin{equation}\label{eq:Lambda_1}
\begin{aligned}
\mathbf{\Lambda }_{mkl}&=\mathbb{E} \{ \mathbf{V}_{mk}^{H}\mathbf{H}_{ml}\} =\mathbb{E}\{ \mathbf{\hat{H}}_{mk}^{H}\mathbf{\hat{H}}_{ml}\}\\
&=\left[ \begin{matrix}
	\mathrm{tr}\left( \mathbf{\Theta }_{mkl}^{11} \right)&		\cdots&		\mathrm{tr}\left( \mathbf{\Theta }_{mkl}^{N1} \right)\\
	\vdots&		\ddots&		\vdots\\
	\mathrm{tr}\left( \mathbf{\Theta }_{mkl}^{1N} \right)&		\cdots&		\mathrm{tr}\left( \mathbf{\Theta }_{mkl}^{NN} \right)\\
\end{matrix} \right] ,
\end{aligned}
\end{equation}
\begin{equation}\label{eq:Lambda_2}
\begin{aligned}
\mathbf{\Lambda }_{m^{\prime}lk}&=\mathbb{E} \{ \mathbf{H}_{m^{\prime}l}^{H}\mathbf{V}_{m^{\prime}k} \} =\mathbb{E} \{ \mathbf{\hat{H}}_{m^{\prime}l}^{H}\mathbf{\hat{H}}_{m^{\prime}k}\}\\
&=\left[ \begin{matrix}
	\mathrm{tr}\left( \mathbf{\Theta }_{m^{\prime}lk}^{11} \right)&		\cdots&		\mathrm{tr}\left( \mathbf{\Theta }_{m^{\prime}lk}^{N1} \right)\\
	\vdots&		\ddots&		\vdots\\
	\mathrm{tr}\left( \mathbf{\Theta }_{m^{\prime}lk}^{1N} \right)&		\cdots&		\mathrm{tr}\left( \mathbf{\Theta }_{m^{\prime}lk}^{NN} \right)\\
\end{matrix} \right]
\end{aligned}
\end{equation}
where $\mathbf{\Theta }_{mkl}\triangleq \mathbb{E} \{ \mathbf{\hat{h}}_{ml}\mathbf{\hat{h}}_{mk}^{H} \} =\tau _p\mathbf{R}_{ml}\mathbf{\tilde{\Omega}}_{l}^{H}\mathbf{\Psi }_{mk}^{-1}\mathbf{\tilde{\Omega}}_{k}\mathbf{R}_{mk}$ and $\mathbf{\Theta }_{m^{\prime}lk}\triangleq \mathbb{E} \{ \mathbf{\hat{h}}_{m^{\prime}k}\mathbf{\hat{h}}_{m^{\prime}l}^{H} \} = \tau _p\mathbf{R}_{m^{\prime}k}\mathbf{\tilde{\Omega}}_{k}^{H}\mathbf{\Psi }_{m^{\prime}k}^{-1}\mathbf{\tilde{\Omega}}_{l}\mathbf{R}_{m^{\prime}l}$.

\textbf{Case 3:} $m=m^{\prime},l\notin \mathcal{P}_k$

In this case, the channel estimate $\mathbf{\hat{H}}_{mk}$ and $\mathbf{H}_{ml}$ are independent and $\mathbf{\Gamma }_{mkl}^{\left( 1 \right)}\triangleq \mathbb{E}\left\{ \mathbf{V}_{mk}^{H}\mathbf{H}_{ml}\mathbf{\bar{P}}_l\mathbf{H}_{ml}^{H}\mathbf{V}_{mk} \right\} \in \mathbb{C}^{N\times N}$ whose $\left( n,n^{\prime}\right)$-th element can be denoted as
$
[ \mathbf{\Gamma }_{mkl}^{\left( 1 \right)} ] _{nn^{\prime}}=\sum_{i=1}^N{\sum_{i^{\prime}=1}^N{[ \mathbf{\bar{P}}_l ] _{i^{\prime}i}\mathbb{E} \{ \mathbf{\hat{h}}_{mk,n}^{H}\mathbf{h}_{ml,i^{\prime}}\mathbf{h}_{ml,i}^{H}\mathbf{\hat{h}}_{mk,n^{\prime}} \}}}
$

Notice that $\mathbf{\hat{h}}_{mk}$ and $\mathbf{h}_{ml}$ are independent, we have
\setcounter{equation}{52}
\begin{equation}
\begin{aligned}
&\mathbb{E} \{ \mathbf{\hat{h}}_{mk,n}^{H}\mathbf{h}_{ml,i^{\prime}}\mathbf{h}_{ml,i}^{H}\mathbf{\hat{h}}_{mk,n^{\prime}} \}\\
&=\mathrm{tr}( \mathbb{E} \{ \mathbf{h}_{ml,i^{\prime}}\mathbf{h}_{ml,i}^{H}\} \mathbb{E} \{ \mathbf{\hat{h}}_{mk,n^{\prime}}\mathbf{\hat{h}}_{mk,n}^{H} \} ) =\mathrm{tr}( \mathbf{R}_{ml}^{i^{\prime}i}\mathbf{\hat{R}}_{mk}^{n^{\prime}n}).
\end{aligned}
\end{equation}
So, we can derive
\begin{equation}
\left[ \mathbf{\Gamma }_{mkl}^{\left( 1 \right)} \right] _{nn^{\prime}}=\sum_{i=1}^N{\sum_{i^{\prime}=1}^N{\left[ \mathbf{\bar{P}}_l \right] _{i^{\prime}i}\mathrm{tr}\left( \mathbf{R}_{ml}^{i^{\prime}i}\mathbf{\hat{R}}_{mk}^{n^{\prime}n} \right)}}.
\end{equation}

\textbf{Case 4:} $m=m^{\prime},l\in \mathcal{P}_k$

In this case, the channel estimate $\mathbf{\hat{H}}_{mk}$ and $\mathbf{H}_{ml}$ are no longer independent. We define $\mathbf{\Gamma }_{mkl}^{\left( 2 \right)}\triangleq \mathbb{E}\{ \mathbf{V}_{mk}^{H}\mathbf{H}_{ml}\mathbf{\bar{P}}_l\mathbf{H}_{ml}^{H}\mathbf{V}_{mk} \} \in \mathbb{C}^{N\times N}$ whose $( n,n^{\prime})$-th element is
\begin{equation}\label{Gama2}
\left[ \mathbf{\Gamma }_{mkl}^{\left( 2 \right)} \right] _{nn^{\prime}}=\sum_{i=1}^N{\sum_{i^{\prime}=1}^N{\left[ \mathbf{\bar{P}}_l \right] _{i^{\prime}i}\mathbb{E} \left\{ \mathbf{\hat{h}}_{mk,n}^{H}\mathbf{h}_{ml,i^{\prime}}\mathbf{h}_{ml,i}^{H}\mathbf{\hat{h}}_{mk,n^{\prime}} \right\}}}.
\end{equation}

Note that $\mathbf{\hat{h}}_{mk}$ and $\mathbf{h}_{ml}$ are no longer independent. Following the similar step in \cite{8620255} and \cite{9276421}, let $\mathbf{x}_{mk}^{\mathrm{p}}\triangleq \mathrm{vec}\left( \mathbf{Y}_{mk}^{\mathrm{p}} \right) -\tau_p\mathbf{\tilde{\Omega}}_{l}\mathrm{vec}\left( \mathbf{H}_{ml} \right) =\mathbf{y}_{mk}^{\mathrm{p}}-\tau _p\mathbf{\tilde{\Omega}}_{l}\mathbf{h}_{ml}$ and $\mathbf{S}_{mk}=\mathbf{R}_{mk}\mathbf{\tilde{\Omega}}_{k}^{H} \mathbf{\Psi }_{mk}^{-1}$, so we can decompose $\mathbb{E}\{ | \mathbf{\hat{h}}_{mk}^{H}\mathbf{h}_{ml} |^2 \}$ as
\begin{equation}\label{Decompose}
\begin{aligned}
\mathbb{E} \left\{ \left| \mathbf{\hat{h}}_{mk}^{H}\mathbf{h}_{ml} \right|^2 \right\} &=\mathbb{E} \left\{ \left| \left[ \mathbf{S}_{mk}\left( \mathbf{x}_{mk}^{\mathrm{p}}+\tau _p\mathbf{\tilde{\Omega}}_l\mathbf{h}_{ml} \right) \right] ^H\mathbf{h}_{ml} \right|^2 \right\}\\
&=\mathbb{E} \left\{ \left| \left( \mathbf{S}_{mk}\mathbf{x}_{mk}^{\mathrm{p}} \right) ^H\mathbf{h}_{ml} \right|^2 \right\} +\tau _{p}^{2}\mathbb{E} \left\{ \left| \left( \mathbf{S}_{mk}\mathbf{\tilde{\Omega}}_l\mathbf{h}_{ml} \right) ^H\mathbf{h}_{ml} \right|^2 \right\} .
\end{aligned}
\end{equation}

We notice that $\mathbf{S}_{mk}\mathbf{x}_{mk}^{\mathrm{p}}\sim \mathcal{N}_{\mathbb{C}}\left( \mathbf{0},\mathbf{F}_{mkl,\left( 1 \right)} \right)$ and $\mathbf{S}_{mk}\mathbf{\tilde{\Omega}}_l\mathbf{h}_{ml}\sim \mathcal{N}_{\mathbb{C}}\left( \mathbf{0},\mathbf{F}_{mkl,\left( 2 \right)} \right)$ where $\mathbf{F}_{mkl,\left( 1 \right)}$ and $\mathbf{F}_{mkl,\left( 2 \right)}$ are given in
\begin{equation}\label{eq:Gama_2_all}
\begin{aligned}
\mathbb{E}\{ \mathbf{\hat{h}}_{mk,n}^{H}\mathbf{h}_{ml,i^{\prime}}\mathbf{h}_{ml,i}^{H}\mathbf{\hat{h}}_{mk,n^{\prime}} \}=\mathbb{E}\{ \left[ \mathbf{S}_{mk}\mathbf{x}_{mk}^{\mathrm{p}} \right] _{n}^{H}\mathbf{h}_{ml,i^{\prime}}\mathbf{h}_{ml,i}^{H}\left[ \mathbf{S}_{mk}\mathbf{x}_{mk}^{\mathrm{p}} \right] _{n^{\prime}} \}+\tau _{p}^{2}\mathbb{E}\{ [ \mathbf{S}_{mk}\mathbf{\tilde{\Omega}}_{l}\mathbf{h}_{ml}] _{n}^{H}\mathbf{h}_{ml,i^{\prime}}\mathbf{h}_{ml,i}^{H}[ \mathbf{S}_{mk}\mathbf{\tilde{\Omega}}_{l}\mathbf{h}_{ml}] _{n^{\prime}} \}.
\end{aligned}
\end{equation}
Then, we can rewrite $\mathbb{E} \{ \mathbf{\hat{h}}_{mk,n}^{H}\mathbf{h}_{ml,i^{\prime}}\mathbf{h}_{ml,i}^{H}\mathbf{\hat{h}}_{mk,n^{\prime}} \}$ as \begin{equation}\label{FF}
\begin{aligned}
\begin{cases}
	\mathbf{F}_{mkl,\left( 1 \right)}=\mathbb{E} \left\{ \mathbf{S}_{mk}\mathbf{x}_{mk}^{\mathrm{p}}\left( \mathbf{x}_{mk}^{\mathrm{p}} \right) ^H\mathbf{S}_{mk}^{H} \right\} =\tau _p\mathbf{S}_{mk}\left( \mathbf{\Psi }_{mk}-\tau _p\mathbf{\tilde{\Omega}}_{l}\mathbf{R}_{ml}\mathbf{\tilde{\Omega}}_{l}^{H} \right) \mathbf{S}_{mk}^{H},\\
	\mathbf{F}_{mkl,\left( 2 \right)}=\mathbb{E} \left\{ \mathbf{S}_{mk}\mathbf{\tilde{\Omega}}_{l}\mathbf{h}_{ml}\mathbf{h}_{ml}^{H}\mathbf{\tilde{\Omega}}_{l}^{H}\mathbf{S}_{mk}^{H} \right\} =\mathbf{S}_{mk}\mathbf{\tilde{\Omega}}_{l}\mathbf{R}_{ml}\mathbf{\tilde{\Omega}}_{l}^{H}\mathbf{S}_{mk}^{H},\\
\end{cases}
\end{aligned}
\end{equation}
where $\left[ \mathbf{v} \right] _n$ is $\left\{ \left( n-1 \right) L+1\sim nL:n=1,\cdots ,N \right\} $-th rows of $\mathbf{v}\in \mathbb{C}^{LN}$.

For the first term, $ \mathbf{S}_{mk}\mathbf{x}_{mk}^{\mathrm{p}}$ and $\mathbf{h}_{ml}$ are independent, we have
\setcounter{equation}{58}
\begin{equation}
\begin{aligned}
&\mathbb{E} \left\{ \left[ \mathbf{S}_{mk}\mathbf{x}_{mk}^{\mathrm{p}} \right] _{n}^{H}\mathbf{h}_{ml,i^{\prime}}\mathbf{h}_{ml,i}^{H}\left[ \mathbf{S}_{mk}\mathbf{x}_{mk}^{\mathrm{p}} \right] _{n^{\prime}} \right\}\\
&=\mathrm{tr}\left( \mathbb{E} \left\{ \mathbf{h}_{ml,i^{\prime}}\mathbf{h}_{ml,i}^{H} \right\} \mathbb{E} \left\{ \left[ \mathbf{S}_{mk}\mathbf{x}_{mk}^{\mathrm{p}} \right] _{n^{\prime}}\left[ \mathbf{S}_{mk}\mathbf{x}_{mk}^{\mathrm{p}} \right] _{n}^{H} \right\} \right)\\ &=\mathrm{tr}\left( \mathbf{R}_{ml}^{i^{\prime}i}\mathbf{F}_{mkl,\left( 1 \right)}^{n^{\prime}n} \right) .
\end{aligned}
\end{equation}

As for the second term, we can express $[ \mathbf{S}_{mk}\mathbf{\tilde{\Omega}}_l\mathbf{h}_{ml} ] _{\{ n,n^{\prime} \}}$ and $\mathbf{h}_{ml,\{ i,i^{\prime} \}}$ into the equivalent form \cite{9148706} as
$[ \mathbf{S}_{mk}\mathbf{\tilde{\Omega}}_l\mathbf{h}_{ml} ] _n=\sum_{q=1}^N{\mathbf{\tilde{F}}_{mkl,( 2 )}^{nq}\mathbf{x}_q}$, $[ \mathbf{S}_{mk}\mathbf{\tilde{\Omega}}_l\mathbf{h}_{ml}] _{n^{\prime}}=\sum_{q=1}^N{\mathbf{\tilde{F}}_{mkl,( 2 )}^{n^{\prime}q}\mathbf{x}_q}$, $\mathbf{h}_{ml,i}=\sum_{q=1}^N{\mathbf{\tilde{R}}_{ml}^{iq}\mathbf{x}_q}$ and $\mathbf{h}_{ml,i^{\prime}}=\sum_{q=1}^N{\mathbf{\tilde{R}}_{ml}^{i^{\prime}q}\mathbf{x}_q}$, respectively, where $\mathbf{\tilde{F}}_{mkl,( 2 )}^{nq}$ denotes the $( n,q )$-submatrix of $( \mathbf{F}_{mkl,( 2 )} ) ^{\frac{1}{2}}$, $\mathbf{\tilde{R}}_{ml}^{iq}$ denotes the $\left( i,q \right)$-submatrix of $\mathbf{R}_{ml}^{\frac{1}{2}}$ and $\mathbf{x}_q\sim \mathcal{N}_{\mathbb{C}}( \mathbf{0},\mathbf{I}_L )$, respectively. To make it clearer, we can rewrite $\mathbf{h}_{ml}$ as
\setcounter{equation}{59}
\begin{equation}
\begin{aligned}
\mathbf{h}_{ml}=\left[ \begin{array}{c}
	\mathbf{h}_{ml,1}\\
	\vdots\\
	\mathbf{h}_{ml,N}\\
\end{array} \right] =\mathbf{R}_{ml}^{\frac{1}{2}}\left[ \begin{array}{c}
	\mathbf{x}_1\\
	\vdots\\
	\mathbf{x}_N\\
\end{array} \right]=\left[ \begin{matrix}
	\mathbf{\tilde{R}}_{ml}^{11}&		\cdots&		\mathbf{\tilde{R}}_{ml}^{1N}\\
	\vdots&		\ddots&		\vdots\\
	\mathbf{\tilde{R}}_{ml}^{N1}&		\cdots&		\mathbf{\tilde{R}}_{ml}^{NN}\\
\end{matrix} \right] \left[ \begin{array}{c}
	\mathbf{x}_1\\
	\vdots\\
	\mathbf{x}_N\\
\end{array} \right] ,
\end{aligned}
\end{equation}
and we also have
\begin{equation}
\begin{aligned}
\mathbf{R}_{ml}^{ij}=\sum_{q=1}^N{\mathbf{\tilde{R}}_{ml}^{iq}( \mathbf{\tilde{R}}_{ml}^{jq} ) ^H}=\sum_{q=1}^N{\mathbf{\tilde{R}}_{ml}^{iq}\mathbf{\tilde{R}}_{ml}^{qj}}.
\end{aligned}
\end{equation}
So we can formulate the second term as
\begin{equation}\label{eq:Term_2_all}
\begin{aligned}
&\mathbb{E}\left\{ \left[ \mathbf{S}_{mk}\mathbf{\tilde{\Omega}}_{l}^{\frac{1}{2}}\mathbf{h}_{ml} \right] _{n}^{H}\mathbf{h}_{ml,i}\mathbf{h}_{ml,i}^{H}\left[ \mathbf{S}_{mk}\mathbf{\tilde{\Omega}}_{l}^{\frac{1}{2}}\mathbf{h}_{ml} \right] _{n^{\prime}} \right\}\\
&=\mathbb{E}\left\{ \left( \sum_{q_1=1}^N{\mathbf{\tilde{F}}_{mkl,\left( 2 \right)}^{nq_1}\mathbf{x}_{q_1}} \right) ^H\left( \sum_{q_2=1}^N{\mathbf{\tilde{R}}_{ml}^{i^{\prime}q_2}\mathbf{x}_{q_2}} \right) \left( \sum_{q_3=1}^N{\mathbf{\tilde{R}}_{ml}^{iq_3}\mathbf{x}_{q_3}} \right) ^H\left( \sum_{q_4=1}^N{\mathbf{\tilde{F}}_{mkl,\left( 2 \right)}^{n^{\prime}q_4}\mathbf{x}_{q_4}} \right) \right\}.
\end{aligned}
\end{equation}

According to \cite{9148706}, if at least one of $q_j,j=1,\cdots 4$ is different from the others, \eqref{eq:Term_2_all} will be $0$ for the circular
symmetry property and the zero mean value of $\mathbf{x}_{q_j}$ and \eqref{eq:Term_2_all} will be non-zero for the case of $q_1=q_2,q_3=q_4$ and $q_1=q_4,q_2=q_3$. For ``$q_1=q_2,q_3=q_4$", \eqref{eq:Term_2_all} can be rewritten as
\begin{equation}\label{eq:Term_2_Case1}
\begin{aligned}
\mathbb{E}\left\{ \left( \sum_{q_1=1}^N{\mathbf{x}_{q_1}^{H}\mathbf{\tilde{F}}_{mkl,\left( 2 \right)}^{q_1n}\mathbf{\tilde{R}}_{ml}^{i^{\prime}q_1}\mathbf{x}_{q_1}} \right) \left( \sum_{q_3=1}^N{\mathbf{x}_{q_3}^{H}\mathbf{\tilde{F}}_{mkl,\left( 2 \right)}^{q_3n^{\prime}}\mathbf{\tilde{R}}_{ml}^{iq_3}\mathbf{x}_{q_3}} \right) ^H \right\}=\sum_{q_1=1}^N{\sum_{q_3=1}^N{\mathrm{tr}\left( \mathbf{\tilde{F}}_{mkl,\left( 2 \right)}^{q_1n}\mathbf{\tilde{R}}_{ml}^{i^{\prime}q_1} \right) \mathrm{tr}\left( \mathbf{\tilde{F}}_{mkl,\left( 2 \right)}^{n^{\prime}q_3}\mathbf{\tilde{R}}_{ml}^{q_3i} \right)}}.
\end{aligned}
\end{equation}
For ``$q_1=q_4,q_2=q_3$", \eqref{eq:Term_2_all} can be rewritten as
\begin{equation}\label{eq:Term_2_Case2}
\begin{aligned}
\mathbb{E} \left\{ \left( \sum_{q_1=1}^N{\mathbf{x}_{q_1}^{H}\mathbf{\tilde{F}}_{mkl,\left( 2 \right)}^{q_1n}} \right) \left( \sum_{q_2=1}^N{\mathbf{\tilde{R}}_{ml}^{i^{\prime}q_2}\mathbf{x}_{q_2}\mathbf{x}_{q_2}^{H}\mathbf{\tilde{R}}_{mkl,\left( 2 \right)}^{q_2i}} \right) \left( \sum_{q_1=1}^N{\mathbf{\tilde{F}}_{mkl,\left( 2 \right)}^{n^{\prime}q_1}\mathbf{x}_{q_1}} \right) \right\}=\sum_{q_1=1}^N{\sum_{q_2=1}^N{\mathrm{tr}\left( \mathbf{\tilde{F}}_{mkl,\left( 2 \right)}^{q_1n}\mathbf{\tilde{R}}_{ml}^{i^{\prime}q_2}\mathbf{\tilde{R}}_{ml}^{q_2i}\mathbf{\tilde{F}}_{mkl,\left( 2 \right)}^{n^{\prime}q_1} \right)}}.
\end{aligned}
\end{equation}

In summary, plugging above results into \eqref{Gama2}, \eqref{Gama2} can be denoted by
\begin{equation}\label{eq:Gama2_Final}
\begin{aligned}
\left[ \mathbf{\Gamma }_{mkl}^{\left( 2 \right)} \right] _{nn^{\prime}}&=\sum_{i=1}^N{\sum_{i^{\prime}=1}^N{\left[ \mathbf{\bar{P}}_l \right] _{i^{\prime}i}\left\{ \mathrm{tr}\left( \mathbf{R}_{ml}^{i^{\prime}i}\mathbf{F}_{mkl,\left( 1 \right)}^{n^{\prime}n} \right) \right.}}\\
&\left. +\tau _{p}^{2}\sum_{q_1=1}^N{\sum_{q_2=1}^N{\left[ \mathrm{tr}\left( \mathbf{\tilde{F}}_{mkl,\left( 2 \right)}^{q_1n}\mathbf{\tilde{R}}_{ml}^{i^{\prime}q_2}\mathbf{\tilde{R}}_{ml}^{q_2i}\mathbf{\tilde{F}}_{mkl,\left( 2 \right)}^{n^{\prime}q_1} \right) +\mathrm{tr}\left( \mathbf{\tilde{F}}_{mkl,\left( 2 \right)}^{q_1n}\mathbf{\tilde{R}}_{ml}^{i^{\prime}q_1} \right) \mathrm{tr}\left( \mathbf{\tilde{F}}_{mkl,\left( 2 \right)}^{n^{\prime}q_2}\mathbf{\tilde{R}}_{ml}^{q_2i} \right) \right]}} \right\}.
\end{aligned}
\end{equation}

Finally, combining all the cases, we can obtain
\setcounter{equation}{65}
\begin{equation}
\begin{aligned}
\mathbb{E}\left\{ \mathbf{G}_{kl}\mathbf{\bar{P}}_l\mathbf{G}_{kl}^{H} \right\} =\mathbf{T}_{kl,\left( 1 \right)}^{\mathrm{L}3}+\begin{cases}
	\mathbf{0}, \ \ \ \quad  l\notin \mathcal{P}_k\\
	\mathbf{T}_{kl,\left( 2 \right)}^{\mathrm{L}3}, l\in \mathcal{P}_k\\
\end{cases}
\end{aligned}
\end{equation}
where $\mathbf{T}_{kl,\left( 1 \right)}^{\mathrm{L}3}=\mathrm{diag}( \mathbf{\Gamma }_{kl,1}^{\left( 1 \right)},\cdots ,\mathbf{\Gamma }_{kl,M}^{\left( 1 \right)} )$ and the $( m,m^{\prime})$-th submatrix of $\mathbf{T}_{kl,\left( 2 \right)}^{\mathrm{L}3}\in \mathbb{C} ^{MN\times MN}$ is
\begin{equation}
\begin{aligned}
\mathbf{T}_{kl,\left( 2 \right)}^{\mathrm{L}3,mm^{\prime}}=\left\{ \begin{array}{c}
	\mathbf{\Gamma }_{kl,m}^{\left( 2 \right)}-\mathbf{\Gamma }_{kl,m}^{\left( 1 \right)},   m=m^{\prime}\\
	\mathbf{\Lambda }_{mkl}\mathbf{\bar{P}}_{l}\mathbf{\Lambda }_{m^{\prime}lk}.  m\ne m^{\prime}\\
\end{array} \right.
\end{aligned}
\end{equation}

\bibliographystyle{IEEEtran}
\bibliography{IEEEabrv,Ref}

\begin{thebibliography}{10}
\providecommand{\url}[1]{#1}
\csname url@samestyle\endcsname
\providecommand{\newblock}{\relax}
\providecommand{\bibinfo}[2]{#2}
\providecommand{\BIBentrySTDinterwordspacing}{\spaceskip=0pt\relax}
\providecommand{\BIBentryALTinterwordstretchfactor}{4}
\providecommand{\BIBentryALTinterwordspacing}{\spaceskip=\fontdimen2\font plus
\BIBentryALTinterwordstretchfactor\fontdimen3\font minus
  \fontdimen4\font\relax}
\providecommand{\BIBforeignlanguage}[2]{{%
\expandafter\ifx\csname l@#1\endcsname\relax
\typeout{** WARNING: IEEEtran.bst: No hyphenation pattern has been}%
\typeout{** loaded for the language `#1'. Using the pattern for}%
\typeout{** the default language instead.}%
\else
\language=\csname l@#1\endcsname
\fi
#2}}
\providecommand{\BIBdecl}{\relax}
\BIBdecl

\bibitem{9113273}
J.~Zhang, E.~Bj{\"o}rnson, M.~Matthaiou, D.~W.~K. Ng, H.~Yang, and D.~J. Love,
  ``Prospective multiple antenna technologies for beyond {5G},'' \emph{IEEE J.
  Sel. Areas Commun}, vol.~38, no.~8, pp. 1637--1660, Jun. 2020.

\bibitem{7827017}
H.~Q. Ngo, A.~Ashikhmin, H.~Yang, E.~G. Larsson, and T.~L. Marzetta,
  ``{Cell-free massive {MIMO} versus small cells},'' \emph{IEEE Trans. Wireless
  Commun.}, vol.~16, no.~3, pp. 1834--1850, Mar. 2017.

\bibitem{7917284}
E.~{Nayebi}, A.~{Ashikhmin}, T.~L. {Marzetta}, H.~{Yang}, and B.~D. {Rao},
  ``Precoding and power optimization in cell-free massive {MIMO} systems,''
  \emph{IEEE Trans. Wireless Commun.}, vol.~16, no.~7, pp. 4445--4459, Jul.
  2017.

\bibitem{8901451}
S.~{Buzzi}, C.~{D'Andrea}, A.~{Zappone}, and C.~{D'Elia}, ``User-centric {5G}
  cellular networks: {R}esource allocation and comparison with the cell-free
  massive {MIMO} approach,'' \emph{IEEE Trans. Wireless Commun.}, vol.~19,
  no.~2, pp. 1250--1264, Feb. 2020.

\bibitem{[162]}
E.~Bj{\"o}rnson and L.~Sanguinetti, ``Making cell-free massive {MIMO}
  competitive with {MMSE} processing and centralized implementation,''
  \emph{IEEE Trans. Wireless Commun.}, vol.~19, no.~1, pp. 77--90, Jan. 2019.

\bibitem{8097026}
H.~Q. {Ngo}, L.~{Tran}, T.~Q. {Duong}, M.~{Matthaiou}, and E.~G. {Larsson},
  ``On the total energy efficiency of cell-free massive {MIMO},'' \emph{IEEE
  Trans. Green Commun. Netw}, vol.~2, no.~1, pp. 25--39, Mar. 2018.

\bibitem{8645336}
H.~Q. {Ngo}, H.~{Tataria}, M.~{Matthaiou}, S.~{Jin}, and E.~G. {Larsson}, ``On
  the performance of cell-free massive {MIMO} in {R}icean fading,'' in
  \emph{Proc. Asilomar Conf. Signals, Syst., Comput.}, Oct. 2018, pp. 980--984.

\bibitem{7869024}
E.~{Nayebi}, A.~{Ashikhmin}, T.~L. {Marzetta}, and B.~D. {Rao}, ``Performance
  of cell-free massive {MIMO} systems with {MMSE} and {LSFD} receivers,'' in
  \emph{Proc. Asilomar Conf. Signals, Syst. Comput.}, Nov. 2016, pp. 203--207.

\bibitem{8809413}
{\"O}.~{{\"O}zdogan}, E.~{Bj{\"o}rnson}, and J.~{Zhang}, ``Performance of
  cell-free massive {MIMO} with {R}ician fading and phase shifts,'' \emph{IEEE
  Trans. Wireless Commun.}, vol.~18, no.~11, pp. 5299--5315, Nov. 2019.

\bibitem{113}
T.~C. {Mai}, H.~Q. {Ngo}, and T.~Q. {Duong}, ``Uplink spectral efficiency of
  cell-free massive {MIMO} with multi-antenna users,'' in \emph{IEEE
  SigTelCom}, Mar. 2019, pp. 126--129.

\bibitem{194}
T.~C. Mai, H.~Q. Ngo, and T.~Q. Duong, ``Downlink spectral efficiency of
  cell-free massive {MIMO} systems with multi-antenna users,'' \emph{IEEE
  Trans. Commun.}, vol.~68, no.~8, pp. 4803--4815, Apr. 2020.

\bibitem{8811486}
Y.~Zhang, M.~Zhou, X.~Qiao, H.~Cao, and L.~Yang, ``On the performance of
  cell-free massive {MIMO} with low-resolution {ADCs},'' \emph{IEEE Access},
  vol.~7, pp. 117\,968--117\,977, 2019.

\bibitem{9424703}
M.~Zhou, L.~Yang, and H.~Zhu, ``Sum-{SE} for multigroup multicast cell-free
  massive {MIMO} with multi-antenna users and low-resolution {DACs},''
  \emph{IEEE Wireless Commun. Lett.}, to appear, 2021.

\bibitem{9276421}
Z.~Wang, J.~Zhang, E.~Bj{\"o}rnson, and B.~Ai, ``Uplink performance of
  cell-free massive {MIMO} over spatially correlated {R}ician fading
  channels,'' \emph{IEEE Commun. Lett.}, vol.~25, no.~4, pp. 1348--1352, Apr.
  2021.

\bibitem{8869794}
Y.~{Zhang}, M.~{Zhou}, H.~{Cao}, L.~{Yang}, and H.~{Zhu}, ``On the performance
  of cell-free massive {MIMO} with mixed-{ADC} under {R}ician fading
  channels,'' \emph{IEEE Commun. Lett.}, vol.~24, no.~1, pp. 43--47, Jan. 2020.

\bibitem{1021913}
J.~Kermoal, L.~Schumacher, K.~Pedersen, P.~Mogensen, and F.~Frederiksen, ``A
  stochastic mimo radio channel model with experimental validation,''
  \emph{IEEEJ. Sel. Areas Commun}, vol.~20, no.~6, pp. 1211--1226, Nov. 2002.

\bibitem{7500452}
X.~Li, E.~Bj{\"o}rnson, S.~Zhou, and J.~Wang, ``Massive {MIMO} with
  multi-antenna users: When are additional user antennas beneficial?'' in
  \emph{2016 23rd International Conference on Telecommunications (ICT)}, May
  2016, pp. 1--6.

\bibitem{7797479}
L.~He, J.~Wang, J.~Song, and L.~Hanzo, ``On the multi-user multi-cell massive
  spatial modulation uplink: How many antennas for each user?'' \emph{IEEE
  Trans. Wireless Commun.}, vol.~16, no.~3, pp. 1437--1451, Dec. 2017.

\bibitem{7946138}
X.~Wu and D.~Liu, ``Novel insight into multi-user channels with multi-antenna
  users,'' \emph{IEEE Commun. Lett.}, vol.~21, no.~9, pp. 1961--1964, June
  2017.

\bibitem{ozcelik2003deficiencies}
H.~Ozcelik, M.~Herdin, W.~Weichselberger, J.~Wallace, and E.~Bonek,
  ``Deficiencies of `kronecker' {MIMO} radio channel model,'' \emph{Electronics
  Letters}, vol.~39, no.~16, pp. 1209--1210, Aug. 2003.

\bibitem{1576533}
W.~{Weichselberger}, M.~{Herdin}, H.~{Ozcelik}, and E.~{Bonek}, ``A stochastic
  {MIMO} channel model with joint correlation of both link ends,'' \emph{IEEE
  Trans. Wireless Commun.}, vol.~5, no.~1, pp. 90--100, Jan. 2006.

\bibitem{8187178}
E.~{Bj{\"o}rnson}, J.~{Hoydis}, and L.~{Sanguinetti}, \emph{Massive MIMO
  Networks: Spectral, Energy, and Hardware Efficiency}, 2017.

\bibitem{5340650}
E.~{Bj{\"o}rnson} and B.~{Ottersten}, ``A framework for training-based
  estimation in arbitrarily correlated {R}ician {MIMO} channels with {R}ician
  disturbance,'' \emph{IEEE Trans. Signal Process.}, vol.~58, no.~3, pp.
  1807--1820, Nov. 2010.

\bibitem{9148706}
K.~{Dovelos}, M.~{Matthaiou}, H.~Q. {Ngo}, and B.~{Bellalta}, ``Massive {MIMO}
  with multi-antenna users under jointly correlated {R}icean fading,'' in
  \emph{Proc. IEEE ICC}, Jun. 2020, pp. 1--6.

\bibitem{5997288}
C.~{Wen}, S.~{Jin}, and K.~{Wong}, ``On the sum-rate of multiuser {MIMO} uplink
  channels with jointly-correlated {R}ician fading,'' \emph{IEEE Trans.
  Commun.}, vol.~59, no.~10, pp. 2883--2895, Aug. 2011.

\bibitem{1624653}
{Taesang Yoo} and A.~{Goldsmith}, ``Capacity and power allocation for fading
  {MIMO} channels with channel estimation error,'' \emph{IEEE Transactions on
  Information Theory}, vol.~52, no.~5, pp. 2203--2214, Apr. 2006.

\bibitem{hjorungnes2011complex}
A.~Hj{\o}rungnes, \emph{Complex-valued matrix derivatives: with applications in
  signal processing and communications}.\hskip 1em plus 0.5em minus 0.4em\relax
  Cambridge University Press, 2011.

\bibitem{tse2005fundamentals}
D.~Tse and P.~Viswanath, \emph{Fundamentals of wireless communication}.\hskip
  1em plus 0.5em minus 0.4em\relax Cambridge university press, 2005.

\bibitem{8620255}
{\"O}.~{{\"O}zdogan}, E.~{Bj{\"o}rnson}, and E.~G. {Larsson}, ``Massive {MIMO}
  with spatially correlated {R}ician fading channels,'' \emph{IEEE Trans.
  Commun.}, vol.~67, no.~5, pp. 3234--3250, May 2019.

\end{thebibliography}



\end{document}